\documentclass[11pt, onecolumn,  
journal]{IEEEtran} \usepackage{amsmath}
\usepackage{graphicx,amssymb}
\usepackage{algorithm}
\usepackage{algorithmic}
\usepackage{amsthm,color}
\usepackage{diagbox}
\usepackage{multirow}
\usepackage{subcaption}
\usepackage{soul}
\usepackage{yhmath}
\usepackage{mathdots}
\usepackage{MnSymbol}

\makeatletter

\@addtoreset{algorithm}{section}
\makeatother

\newtheorem{theorem}{Theorem}[section]

\newtheorem{proposition}[theorem]{Proposition}
\newtheorem{definition}[theorem]{Definition}

\newtheorem{remark}[theorem]{Remark}

\def\G{ {\mathcal G} }
\def\R{ {\mathbb R} }

\def\V{ {\bf V} }

\def\L{ {\bf L} }
\def\U{ {\bf U} }

\def\A{ {\bf A} }

\numberwithin{equation}{section}


\begin{document}

\title{Graph Fourier transform based on singular value decomposition of  directed Laplacian }
\author{Yang Chen, Cheng Cheng, Qiyu Sun
\thanks{Chen is with Key Laboratory of Computing and Stochastic Mathematics (Ministry of Education), School of Mathematics and Statistics,
 Hunan Normal University, Changsha, Hunan 410081, China;
Cheng is with  School of Mathematics, Sun Yat-sen University,   Guangzhou, Guangdong 510275, China;
Sun is with  Department of Mathematics, University of Central Florida, Orlando, Florida 32816, USA;
Emails: ychenmath@hunnu.edu.cn; chengch66@mail.sysu.edu.cn;   qiyu.sun@ucf.edu;
This work is partially supported by
the  National Science Foundation (DMS-1816313),  National Nature Science Foundation of China (11901192, 12171490), Guangdong Province Nature Science Foundation (2022A1515011060), and  Scientific Research Fund of Hunan Provincial Education Department(18C0059).
}}
\maketitle

\begin{abstract}
Graph Fourier transform (GFT) is a fundamental concept in graph signal processing.
In this paper, based on singular value decomposition of Laplacian,
we introduce a novel definition of  GFT
on directed graphs, and use  singular values of Laplacian to carry the notion of graph frequencies. 
The proposed GFT is consistent with the conventional GFT in the undirected graph setting,
and on directed circulant graphs,
the proposed GFT is the classical discrete Fourier transform, up to some rotation, permutation and phase adjustment. We  show that
  frequencies and frequency components of the proposed GFT can be evaluated by solving some constrained minimization problems with  low computational cost.
Numerical demonstrations indicate that the proposed GFT could
 represent graph signals with different modes of variation  efficiently.

\end{abstract}

\section{Introduction}

Graph signal processing provides
an innovative framework to represent, analyze and process data sets residing on networks, and its mathematical foundation is closely related to   applied and computational harmonic analysis and  spectral graph theory
\cite{sandryhaila13}-\cite{Cheng19}.
Graph Fourier transform (GFT)
is one of fundamental tools  in graph signal processing that
 decomposes  graph signals  into different frequency components and
 represents them by different modes of variation.
 GFT on directed graphs is an important tool to
 identify patterns and quantify influence of various members and communities of a social network,  and to understand  dynamic of a network.
 The GFT on undirected graphs has been well-studied and several approaches have been proposed to define GFT on directed graphs
 \cite{aliaksei14, stankovic2019introduction, Chungbook}-\cite{Yang21}.
  In this paper,  we introduce a novel definition of  GFT
on directed graphs, which is based on singular value decomposition of the associated Laplacian, see Definition \ref{fourier.def}.

Let $\G=(V, E)$ be  a weighted (un)directed graph of order $N$
containing no loops or
multiple edges, and denote the associated adjacency matrix, in-degree matrix and Laplacian by ${\bf A}, {\bf D}$ and ${\bf L}:={\bf D}-{\bf A}$ respectively.
In the undirected graph setting (i.e., the associated adjacent matrix ${\bf A}$ is symmetric),
the  Laplacian ${\mathbf L}$ is positive semi-definite  and it has  the following  eigendecomposition
\begin{equation} \label{undirectedLaplacian.def1}
{\bf L}= {\bf V} {\pmb \Lambda} {\bf V}^T=\sum_{i=0}^{N-1} \lambda_i {\bf v}_i {\bf v}_i^T,
\end{equation}
where ${\bf V}=[{\bf v}_0, \ldots, {\bf v}_{N-1}]$ is an orthogonal matrix and
${\pmb \Lambda}={\rm diag} (\lambda_0, \ldots, \lambda_{N-1})$ is a diagonal matrix of nonnegative eigenvalues of the Laplacian ${\mathbf L}$  in nondecreasing order, i.e.,
$
\lambda_0\le \ldots\le \lambda_{N-1}$. 
A well-accepted 
definition of GFT in the undirected  graph setting
is given by
\begin{equation} \label{undirectedLaplacian.def2}
{\mathcal F} {\bf x}= {\bf V}^T {\bf x}=\sum_{i=0}^{N-1} \langle {\bf x}, {\bf v}_i\rangle {\bf v}_i,
\end{equation}
where ${\bf x}$ is  a  graph signal and $\langle \cdot, \cdot\rangle $ 
 is the standard inner product on ${\mathbb R}^N$  \cite{Ricuad2019, stankovic2019introduction, Chungbook,  Lu2019, Yang21, magoarou2018}. 
The eigenvalues $\lambda_i$ and the associated eigenvectors ${\bf v}_i, 0\le i\le N-1$,
of the Laplacian ${\mathbf L}$ are considered as  frequencies and frequency components of the GFT  just defined. It is known that
the GFT in  \eqref{undirectedLaplacian.def2}
is orthogonal and on a cycle graph, it is  essentially the classical discrete Fourier transform.


The GFT  \eqref{undirectedLaplacian.def2}  does not apply directly for  weighted and directed graphs, which are widely used  to describe the interaction structure of a social network
that  has members of various types, such as individuals, organizations, leaders and followers, and the
pairwise interactions between members being not always mutual and  equitable
 \cite{Wasserman1994}-\cite{Segarra17}.
A natural approach is to replace the eigendecomposition
\eqref{undirectedLaplacian.def1} of the Laplacian ${\mathbf L}$ by the Jordan decomposition
$\L=\V {\bf J} \V^{-1}$, and then to define the GFT   of a  signal ${\bf x}$  on a directed graph by
\begin{equation}
\label{Jordan.gft.def}
{\cal F}{{\bf x}}=\V^{-1}{{\bf x}}
\end{equation}
\cite{aliaksei14,  sandryhaila2013, sandryhaila14, Singh16, deri2017, Domingos20}.
The  GFT in \eqref{Jordan.gft.def} could have complex frequencies and it is not always unitary. More  critically,
  Jordan decomposition of the Laplacian  ${\mathbf L}$ on directed graphs could be
numerically unstable and computationally  expensive, and hence it could be difficult to be applied for graph spectral analysis and decomposition,
see \cite{Domingos20} for a modified Jordan decomposition with some numerical stability.
Our GFT  in Definition \ref{fourier.def}
is based on  the  singular value decomposition (SVD)
\begin{equation}
\label{svd.def}\L=\U\pmb \Sigma\V^T=\sum_{i=0}^{N-1} \sigma_i {\bf u}_i {\bf  v}_i^T \end{equation}
of  the Laplacian  $\L$
and has its nonnegative singular values $\sigma_i, 0\le i
\le N-1$, as frequencies  and ${\bf u}_i, {\bf v}_i, 0\le i\le N-1$ as the associated left/right frequency components, where
\begin{equation}\label{UV.component}
 {\bf U}=[{\bf u}_0, \ldots, {\bf u}_{N-1}] \ {\rm and} \   {\bf V}=[{\bf v}_0, \ldots, {\bf v}_{N-1}]
\end{equation}
 are  orthogonal matrices, and the diagonal matrix
 $\pmb \Sigma={\rm diag} (\sigma_0, \ldots, \sigma_{N-1})$
   has   singular values  deployed on the diagonal in a nondecreasing order, i.e.,
$    0\le\sigma_0\le\sigma_1\le\ldots\le\sigma_{N-1}$.  
Compared  with the GFT \eqref{Jordan.gft.def} based on a Jordan decomposition, a significant advantage of the proposed SVD-based GFT is
on numerically stability and low computational cost.

  Given a graph signal ${\bf x}$ on a directed graph, denote its Euclidean norm   by $\|{\bf x}\|_2$, and define
 its quadratic variation  
by
 \begin{equation}\label{TV.def00}
 {\rm QV}({\bf x})  = {\bf x}^T\L{\bf x}=\frac{1}{2} {\bf x}^T(\L+\L^T){\bf x}
 \end{equation}
 \cite{Ricuad2019, stankovic2019introduction, Chungbook, Girault2018}.
 In the undirected graph setting,
 frequencies $\lambda_i$ and their corresponding frequency components
  ${\bf v}_i, 0\le i\le N-1$, of the GFT in \eqref{undirectedLaplacian.def2}
can be obtained via solving the following constrained minimization problems
         \begin{equation}\label{CFminimization.eq00}
\left\{\begin{array}{l}
\lambda_i=\min_{{\bf x}\in W_i^\perp \ {\rm with} \ \|{\bf x}\|_2=1} {\rm QV}({\bf x})\\
{\bf v}_i =\arg \min_{{\bf x}\in W_i^\perp \ {\rm with} \ \|{\bf x}\|_2=1}  {\rm QV}({\bf x})
\end{array}\right.
\end{equation}
inductively for $1\le i\le N-1$, where $\lambda_0=0$,    the initial ${\bf v}_0$ is usually selected by $N^{-1/2} {\bf 1}$,
 and
 $W_i^\perp, 1\le i\le N-1$, are the orthogonal complements of the space spanned by
${\bf v}_j, 0\le j\le i-1$.
We remark that the quadratic variation $\rm QV$ in \eqref{TV.def00} overlook the edge direction in the directed graph setting.
To define GFT on directed graphs, several directed variations to measure the change of signals along   the  graph structure
 have been proposed
 \cite{Sardellitti17, Girault2018, Shafipour19}. 
 The authors in \cite{Sardellitti17, Girault2018, Shafipour19} define frequency and frequency components of GFT on directed graphs via solving
some constrained optimization problems with  directed variations as their objective functions, see  Remark \ref{comparison.remark} for detailed explanation.
 In Section
 \ref{GftLap.section},  we show that right frequency components ${\bf v}_i, 0\le i\le N-1$,
of the proposed GFT  can be obtained via  solving
 constrained minimization problems
 \eqref{CFminimization.eq00} with the objective function ${\rm QV}({\bf x})$
replaced by $ \|{\bf L}{\bf x}\|_2=\sqrt{{\bf x}^T {\bf L}^T{\bf L}{\bf x}}$, see \eqref{frequencycomponents.eq}.
Compared with the GFTs based on constrained optimization of directed variations in
\cite{Sardellitti17} and \cite{Shafipour19},  major differences are that the GFT proposed in this paper coincides with the conventional GFT  \eqref{undirectedLaplacian.def2} in the undirected graph setting, see \eqref{undirectfourier.def}, and that
on directed circulant graphs,  it is essentially the classical discrete Fourier transform, up to certain rotation, permutation and phase adjustment, see Theorem \ref{circulant.thm}.

We say that a graph ${\mathcal E}$ is
 an {\em Eulerian graph} if
the in-degree and out-degree are the same at each vertex,
and that  ${\mathcal E}^T$ is the {\em transpose} of a directed  graph ${\mathcal E}$ if
they have the
same vertex set and  the adjacent matrix  is the transpose of the   adjacent matrix
of the original graph ${\mathcal E}$.
To  measure the ``symmetry" of a directed Eulerian graph ${\mathcal E}$,
in Section \ref{Euleriangraph.section} we consider GFT
${\mathcal F}_t, 0\le t\le 1$,
 on a family of directed graphs  ${\mathcal E}_t, 0\le t\le 1$,
 to connect an Eulerian graph  ${\mathcal E}$ to its transpose  ${\mathcal E}^T$, and
 study  algebraic and analytic properties of the corresponding frequencies and frequency components of the  GFT ${\mathcal F}_t, 0\le t\le 1$, see
 \eqref{Euleriangraph.eq6},
\eqref{sigmalipschitz.eq}, and  Theorems \ref{Euleriangraph.thm}, \ref{Euleriangraph.cor} and \ref{missundirected.thm}.

{\bf Notation}: Bold lower cases and capitals are used to represent the column vectors and matrices respectively.
 Denote the Hermitian and transpose of a matrix $\A$ by $\A^H$ and $\A^T$ respectively,
  and use $\bf 1$, $\bf 0$,  $\bf I$ and $\bf O$ to represent a vector with all 1s,  a row/column vector with all 0s, an identity matrix, and a zero matrix of appropriate size.

\section{Graph Fourier transforms on  directed graphs}\label{freq.sec}

Let $\G=(V, E)$ be
a weighted directed graph of order  $N$
containing no loops or multiple edges, and denote the associated Laplacian  by ${\bf L}={\bf D}-{\bf A}$, where
the adjacent matrix ${\bf A}=(a_{ij})_{i,j\in V}$ has nonzero weights $a_{ij}\ne 0$ only when there is an directed edge from  node $j$ to  node $i$,
and the in-degree matrix ${\bf D}={\rm diag}(d_i)_{i\in V}$ has the in-degree $d_i=\sum_{j\in V}a_{ij}$ of node $i\in V$ as its diagonal entries.
  The Laplacian  ${\mathbf L}$ 
 has eigenvalue zero and the  constant signal ${\bf 1}$ as an associated eigenvector
\begin{equation}\label{Laplacian.def0}
{\bf L}{\bf 1}={\bf 0}, \end{equation}
and in the undirected graph setting, its eigendecomposition
\eqref{undirectedLaplacian.def1}
is used to define GFT  on undirected graphs
\cite{Ricuad2019, stankovic2019introduction,     magoarou2018}.
In this section, 
based on the eigendecomposition \eqref{mathL.decomp}
of the self-adjoint dilation  ${\cal S}(\L)$ of the Laplacian $\L$,
 we propose a novel definition of  GFT and inverse GFT  on  directed graphs, see Definition
 \ref{fourier.def}.
 The proposed GFT   preserves the Parseval identity, see \eqref{parsevalidentity.eq},
 and  in the undirected graph setting,
  it coincides with the conventional GFT in \eqref{undirectedLaplacian.def2}, see \eqref{undirectfourier.def}.
  Circulant graphs have been widely used in image processing
  \cite{ekambaram13}-\cite{ncjs22}.
 In Theorem \ref{circulant.thm}, we  show that  the proposed SVD-based GFT
  on a directed circulant graph is essentially the classical discrete Fourier transform,
  up to certain rotation, permutation and phase adjustment.

Let  orthogonal matrices  $\U=[{\bf u}_0, \ldots, {\bf u}_{N-1}], \V=[{\bf v}_0, \ldots, {\bf v}_{N-1}]$  and  diagonal matrix
 $\pmb \Sigma={\rm diag} (\sigma_0, \ldots, \sigma_{N-1})$
 be as in the  SVD
\eqref{svd.def}
of  the Laplacian  $\L$.
Then 
the self-adjoint dilation  
\begin{equation}\label{mathL.def}
 {\cal S}(\L):=
\begin{pmatrix}
{\bf O}&{\bf L}\\
{\bf L}^T &{\bf O}\end{pmatrix}\in \R^{2N\times 2N }
\end{equation}
 of the Laplacian   $\L$   has the following eigendecomposition,
\begin{equation}\label{mathL.decomp}
{\cal S}(\L)={\bf F} \begin{pmatrix}
\pmb\Sigma&{\bf O}\\
{\bf O} &-\pmb\Sigma\end{pmatrix} {\bf F}^T,
\end{equation}
where
\begin{equation}\label{Fouriermatrix.def}
{\bf F}=\frac{1}{\sqrt 2}\begin{pmatrix}
\U&\U\\
\V&-\V
\end{pmatrix}\in {\mathbb R}^{2N\times 2N}
\end{equation}
is an orthogonal matrix.
 Using the above orthogonal matrix ${\bf F}$, we define the  GFT and inverse  GFT
  on the directed graph ${\mathcal G}$
   as follows.

\begin{definition}\label{fourier.def}
{\rm
Let ${\bf F}$ be the orthogonal matrix  in \eqref{Fouriermatrix.def}.
We define  {\em graph Fourier transform} ${\mathcal F}: {\mathbb R}^N\longmapsto {\mathbb R}^{2N}$
and {\em inverse  graph Fourier transform} ${\mathcal F}^{-1}: {\mathbb R}^{2N}\longmapsto {\mathbb R}^{N}$
on the  directed graph $\G$
 by
 \begin{equation}\label{jointfourier.def}
\mathcal F{\bf x}:={\bf F}^T\begin{pmatrix}
{\bf x}/\sqrt{2}\\
{\bf x}/\sqrt{2}
\end{pmatrix}=\begin{pmatrix}
 ({\bf U}^T+{\bf V}^T){\bf x}/2\\
 ({\bf U}^T-{\bf V}^T){\bf x}/2
\end{pmatrix}
\end{equation}
and
\begin{equation}\label{inverFT.def}
{\mathcal F}^{-1}\begin{pmatrix}
{\bf z}_1\\
{\bf z}_2
\end{pmatrix}\hskip-0.03in:=\hskip-0.05in \begin{pmatrix}
\frac{{\bf I}}{\sqrt{2}} \ \frac{{\bf I}}{\sqrt{2}}
\end{pmatrix}{\bf F} \begin{pmatrix}
{\bf z}_1\\
{\bf z}_2
\end{pmatrix}\hskip-0.03in=\hskip-0.03in \frac{1}{2} \big({\bf U} ({\bf z}_1+{\bf z}_2)+ {\bf V}({\bf z}_1-{\bf z}_2)\big),
\qquad  
\end{equation}
 where ${\bf x}$ is a graph signal on $\G$ and ${\bf z}_1, {\bf z}_2\in \R^N$. 
}\end{definition}

The GFT in \eqref{jointfourier.def}  provides a tool 
to analyze and represent signals in the spectral domain.
Shown  in  Figure \ref{piecewise.fig} is a piecewise constant signal on
      a weighted Minnesota traffic graph (left),
      the frequencies $\sigma_i, 0\le i\le N-1$, of the proposed GFT (second),
       and the first and next $N$-th components of the  GFT of
      a piecewise  constant signal  on the graph (third and right).
We observe that the piecewise constant signal on the weighted Minnesota graph 
has 
its energy concentrated mainly at  low frequencies.

\begin{figure}[t]
\includegraphics[width=42mm, height=36mm]{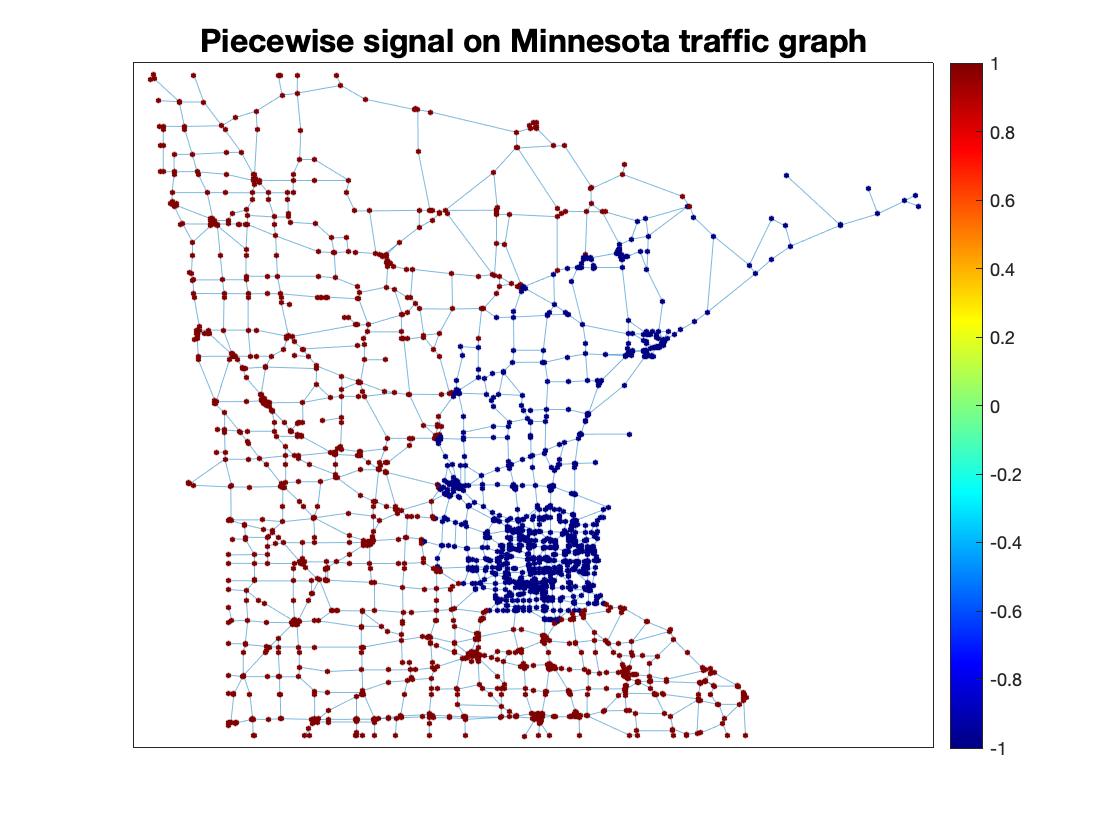}
\includegraphics[width=42mm, height=36mm]{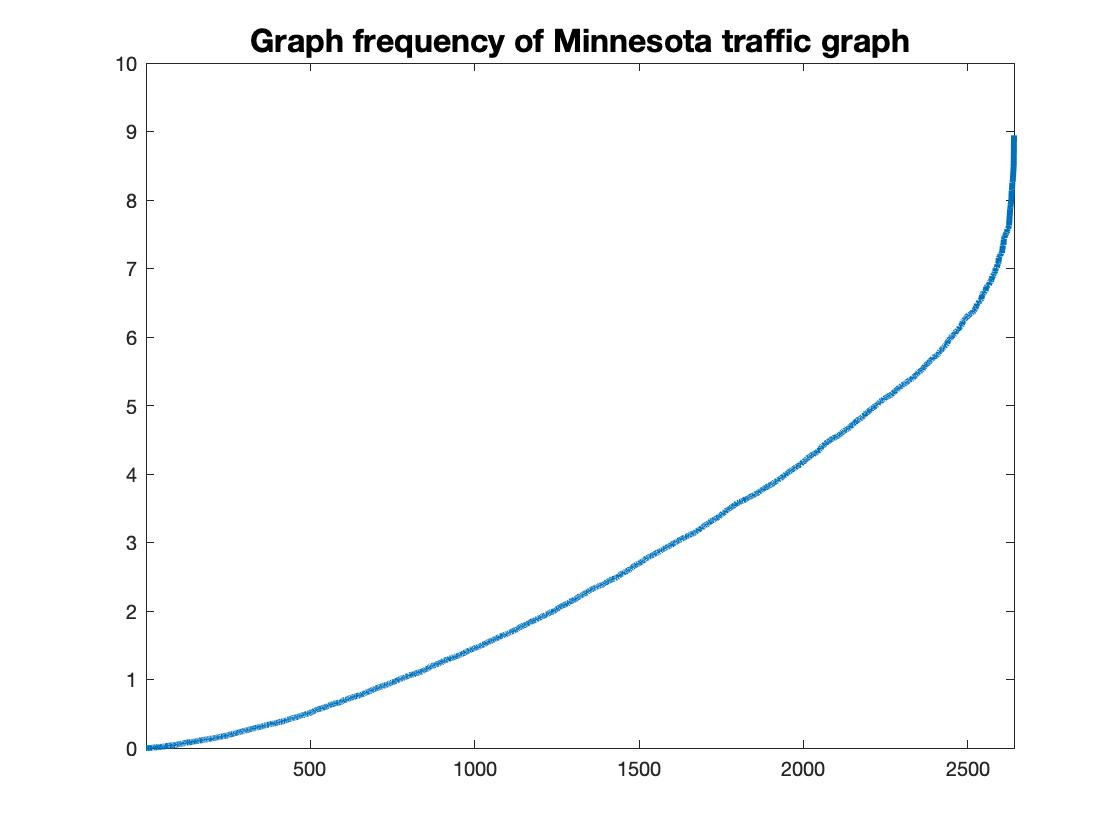}
\includegraphics[width=42mm, height=36mm]{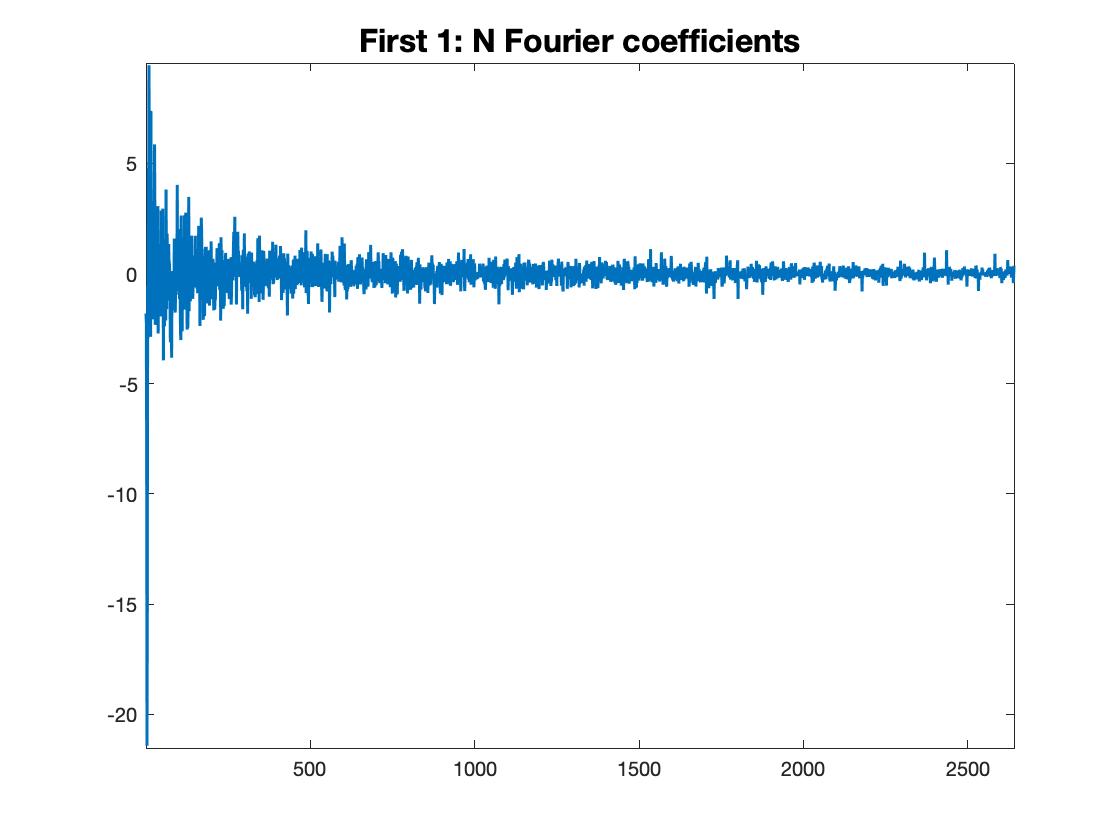}
\includegraphics[width=42mm, height=36mm]{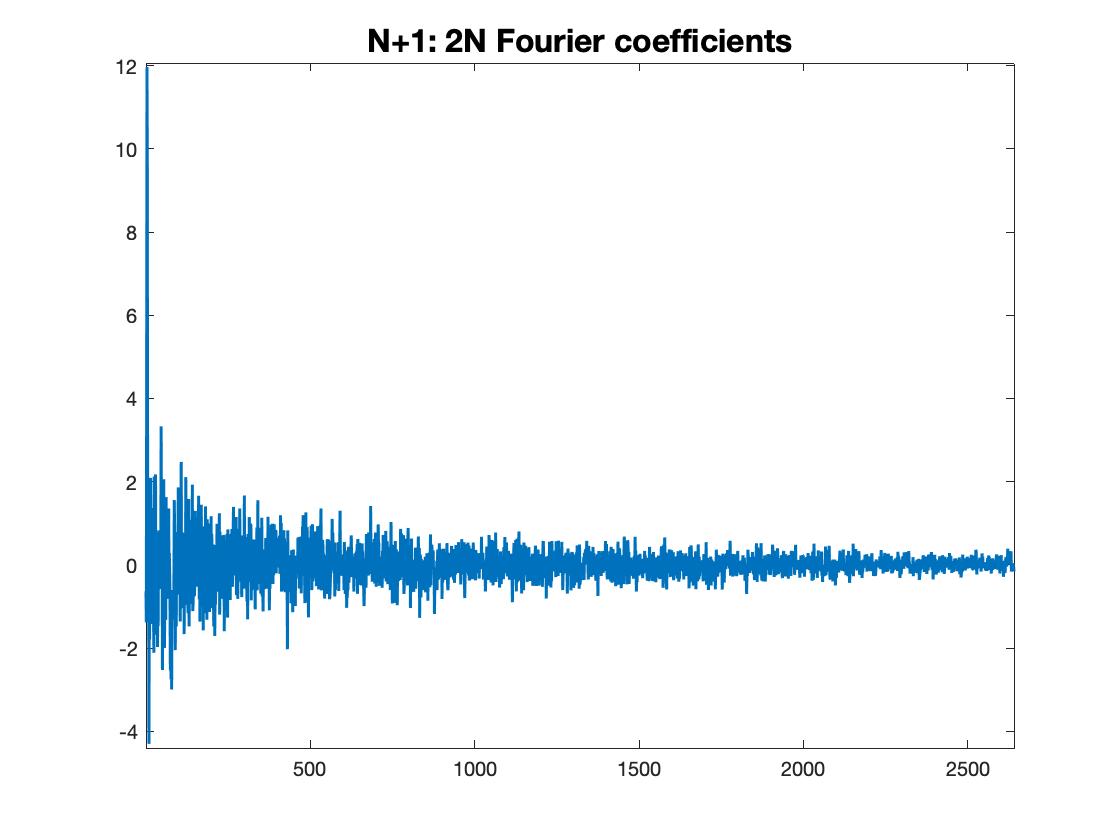}
\caption{Plotted on the  left and the second are
 a  piecewise constant signal ${\bf x}_0$ on a weighted  Minnesota traffic graph of order $N=2640$ with
the weights $w_{ij}$ on adjacent edges  $(j,i)$  being randomly chosen in the interval $[0, 2]$, and
 the frequencies $\sigma_i, 0\le i\le  N-1$, in \eqref{frequency.def} of our SVD-based GFT respectively.
On the third and fourth are the first $N$-component
 $({\bf U}^T+{\bf V}^T){\bf x}_0/2$ and the next $N$-components $({\bf U}^T-{\bf V}^T){\bf x}_0/2$
of the GFT ${\mathcal F}{\bf x}_0$ of the  signal ${\bf x}_0$ plotted on the left. The relative percentage of signal energy
$(\sum_{i=0}^{M-1} |({\bf u}_i+{\bf v}_i)^T {\bf x}_0/2|^2+ |({\bf u}_i-{\bf v}_i)^T {\bf x}_0/2|^2)^{1/2}/\|{\bf x}_0\|_2$
for the first $M= 20,  50$ and $100$ frequencies are $0.6964, 0.7602, 0.8063$. 
 }
\label{piecewise.fig}
\end{figure}

By the orthogonality of the matrix  ${\bf F}$, one may verify that
the Parseval's identity  holds, 
 \begin{equation}\label{parsevalidentity.eq}
 \|{\bf x}\|_2=\|\mathcal F{\bf x}\|_2, \ {\bf x}\in\R^N,
 \end{equation}
and the inverse GFT  ${\mathcal F}^{-1}$ is the pseudo-inverse of the GFT ${\mathcal F}$, i.e.,
\begin{equation}
{\mathcal F}^{-1} \begin{pmatrix}
{\bf z}_1\\
{\bf z}_2
\end{pmatrix}={\rm arg}\min_{{\bf z}\in {\mathbb R}^N} \left\|{\mathcal F}{\bf z}-\begin{pmatrix}
{\bf z}_1\\
{\bf z}_2
\end{pmatrix}\right\|_2, \ {\bf z}_1, {\bf z}_2\in {\mathbb R}^N.
\end{equation}
Therefore the original graph signal ${\bf x}$ can be reconstructed from its GFT ${\mathcal F}{\bf x}$,
\begin{equation}
{\mathcal F}^{-1} {\mathcal F} {\bf x}={\bf x}, \ \ {\bf x}\in {\mathbb R}^N.
\end{equation}
%


For the case that the graph ${\mathcal G}$ is undirected, 
orthogonal matrices  ${\bf U}$ and ${\bf V}$ in  \eqref{svd.def} can be selected to be the same, i.e.,
${\bf U}={\bf V}$. Then the corresponding GFT ${\mathcal F}{\bf x}$
 of a graph signal ${\bf x}$ becomes
\begin{equation}\label{undirectfourier.def}
{\mathcal F}{\bf x}= \begin{pmatrix}
{\bf V}^T{\bf x}\\
{\bf 0}
\end{pmatrix}.
\end{equation}
This shows that,  in the undirected graph setting, the proposed SVD-based GFT is essentially the same as  the well-accepted GFT  \eqref{undirectedLaplacian.def2}
on undirected graphs.

\smallskip

For $N\ge 1$ and a set $Q=\{q_1, \ldots, q_L\}$ of positive integers ordered with $1\le q_1<\ldots< q_L\le N-1$, let
the  {\em directed  circulant graph} ${\mathcal C}_d:={\mathcal C}_d(N, Q)$  generated by  $Q$
be the unweighted graph with  the vertex set   $V_N=\{0, 1, \ldots, N-1\}$  and the edge set
$E_N(Q)=\{(i, i+q\ {\rm mod}\ N),\  i\in V_N, q\in Q\}$,  
where  we say that
 $a=b\ {\rm mod }\ N$ if $(a-b)/N$ is an integer
  \cite{ ekambaram13}-\cite{ncjs22}.
Set \begin{equation}  \label{circulant.symbol}
 P(z)=L-\sum_{l=1}^L z^{q_l}, \end{equation}
 and
denote the  discrete Fourier transform  matrix by
\begin{equation}
\label{wdft.def}
{\bf W}:= \big(N^{-1/2} \omega_N^{ij}\big)_{0\le i, j\le N-1},\end{equation}
where  $\omega_N=\exp (2\pi \sqrt{-1}/N)$ is the $N$-th root of the unit.
   One may verify that   the Laplacian   $\L_{{\mathcal C}_d}$
  on the directed circulant graph ${\mathcal C}_d$ is a circulant matrix
  that has
 eigenvalues
  \begin{equation}\label{symbolvalue.decomp}
  P(\omega_N^{i})=|P(\omega_N^{i})|\exp(\sqrt{-1}\theta_i), 0\le i\le N-1,
  \end{equation}
 and
the $i$-th column of the  discrete Fourier transform  matrix
${\bf W}$ as an unit eigenvector associated with the eigenvalue
 $ P(\omega_N^{i}), 0\le i\le N-1$,
  where
 \begin{equation*}\exp(\sqrt{-1}\theta_i)=\left\{\begin{array}
{ll} 1 & {\rm if} \ P(\omega_N^{i})=0 \\
P(\omega_N^{i})/|P(\omega_N^{i})|  & {\rm if} \  P(\omega_N^{i})\ne 0.
\end{array}\right.\end{equation*}

Let
\begin{equation} \label{circulant.thm.eq1}
{\bf R}=\left\{\begin{array}
{ll} {\rm diag} (1, {\bf R}_2, \ldots, {\bf R}_2)  & {\rm if} \ N \ {\rm is \ odd} \\
{\rm diag} (1, {\bf R}_2, \ldots, {\bf R}_2, 1)  & {\rm if} \ N \ {\rm is \ even}
\end{array}\right.
\end{equation}
be the block diagonal matrix with number one and the $2\times 2$ unitary matrix
${\bf R}_2=\frac{1}{\sqrt{2}}\begin{pmatrix}
1&-\sqrt{-1}\\
1&\sqrt{-1}
\end{pmatrix}$ as its diagonal blocks, and
let the  diagonal matrix
\begin{equation} \label{circulant.thm.eq2}
{\bf \Theta}=
{\rm diag} \big(\exp(\sqrt{-1}\theta_0), \ldots, \exp(\sqrt{-1}\theta_{N-1})\big) \end{equation}
have  phases $\exp(\sqrt{-1}\theta_i)$  in \eqref{symbolvalue.decomp}
as its diagonal entries.
In Proposition \ref{singular.prop} of Appendix \ref{TheoremCirculant.proof}, we show that the Laplacian matrix  $\L_{{\mathcal C}_d}$
  on the directed circulant graph ${\mathcal C}_d$ has the following SVD,
   \begin{equation}
  \label{circulant.thm.eq3}
  \L_{{\mathcal C}_d}= {\bf U} {\pmb \Sigma} {\bf V}^T,
 \end{equation}
 where  ${\bf P}_0$ and ${\bf P}_1$ are permutation  matrices (see \eqref{circulant.thm.pfeq5} and \eqref{circulant.thm.pfeq6} for explicit expressions),
 \begin{equation} \label{circulant.thm.eq4}
 {\bf U}={\bf W}{\bf \Theta} {\bf P}_0 {\bf R} {\bf P}_1\ {\rm and} \
 {\bf V}= {\bf W} {\bf P}_0 {\bf R} {\bf P}_1
 \end{equation}
are orthogonal matrices with real entries, and
 \begin{equation} \label{circulant.thm.eq5}
 {\pmb \Sigma}={\rm diag}(\sigma_0, \ldots, \sigma_{N-1})\end{equation}
has diagonal entries being nondecreasing rearrangement
of the magnitudes
   $|P(\omega_N^{i})|, 0\le i\le N-1$,  in \eqref{symbolvalue.decomp}.
Based on the above  SVD of
 the Laplacian matrix  $\L_{{\mathcal C}_d}$, we observe that
 the  GFT
  in Definition \ref{fourier.def}
 is  essentially
 the classical discrete Fourier transform,
 \begin{equation} \label{dft.def} {\rm DFT}({\bf x}):={\bf W}^{H}{\bf x}, \  {\bf x}\in {\mathbb R}^N,
\end{equation}
  up to certain rotation  ${\bf R}$, phase adjustment ${\bf \Theta}$ and permutations  ${\bf P}_0$ and ${\bf P}_1$.

 \begin{theorem}\label{circulant.thm} {\rm
Let  $N\ge 1, Q=\{q_1, \ldots, q_L\}$ be a set of positive integers  with $1\le q_1<\ldots<q_L\le N-1$,
${\mathcal C}_d(N, Q)$ be the directed circulant graph generated by  $Q$,
and take the SVD
  \eqref{circulant.thm.eq3}  of
   the Laplacian matrix  $\L_{{\mathcal C}_d}$
  on  ${\mathcal C}_d(N, Q)$.
  Then the corresponding
GFT  in \eqref{jointfourier.def} is given by
  \begin{eqnarray*}\label{circulant.thm.eq6}
{\mathcal F}{\bf x}  & \hskip-0.08in = & \hskip-0.08in  \frac{1}{2}\begin{pmatrix}
{\bf P}_1   &{\bf O}\\
{\bf O} & {\bf P}_1
\end{pmatrix}
\begin{pmatrix}
 {\bf R}  &{\bf O}\\
{\bf O} &  {\bf R}
\end{pmatrix}^H
\begin{pmatrix}
 {\bf P}_0  &{\bf O}\\
{\bf O} &  {\bf P}_0
\end{pmatrix}\begin{pmatrix}
{\bf \Theta}&{\bf \Theta}\\
{\bf I} &- {\bf I}
\end{pmatrix}^H \begin{pmatrix}
{\rm DFT}({\bf x})\\
{\rm DFT}({\bf x})
\end{pmatrix},\end{eqnarray*}
where ${\bf x}$ is a  signal on  ${\mathcal C}_d(N, Q)$, and the rotation ${\bf R}$, the phase adjustment matrix $\bf \Theta$,  and the permutations ${\bf P}_0$ and ${\bf P}_1$, are
given in \eqref{circulant.thm.eq1}, \eqref{circulant.thm.eq2},
 \eqref{circulant.thm.pfeq5} and  \eqref{circulant.thm.pfeq6}  respectively.
}
\end{theorem}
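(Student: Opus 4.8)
The plan is to substitute the singular value decomposition \eqref{circulant.thm.eq3}--\eqref{circulant.thm.eq5} of $\L_{{\mathcal C}_d}$ directly into the definition \eqref{jointfourier.def} and reduce everything to block-matrix algebra anchored on the discrete Fourier matrix ${\bf W}$. Recall $\mathcal F{\bf x}={\bf F}^T({\bf x}/\sqrt 2,\,{\bf x}/\sqrt 2)^T$ with ${\bf F}$ as in \eqref{Fouriermatrix.def}. The one point I would stress at the outset is that, by \eqref{circulant.thm.eq4}, the matrices ${\bf U}$ and ${\bf V}$ are \emph{real} orthogonal, hence so is ${\bf F}$, and therefore ${\bf F}^T={\bf F}^H$. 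This replacement is not cosmetic: it is exactly what forces the complex factors ${\bf W},{\bf \Theta}$ to appear through their conjugate transposes, so that ${\bf W}^H{\bf x}={\rm DFT}({\bf x})$ emerges rather than ${\bf W}^T{\bf x}={\bf W}{\bf x}$ (note ${\bf W}$ is symmetric but not Hermitian).

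First I would factor ${\bf F}$. Pulling the common string ${\bf P}_0{\bf R}{\bf P}_1$ out of both ${\bf U}={\bf W}{\bf \Theta}{\bf P}_0{\bf R}{\bf P}_1$ and ${\bf V}={\bf W}{\bf P}_0{\bf R}{\bf P}_1$, and isolating the distinct left factors ${\bf W}{\bf \Theta}$ and ${\bf W}$, one gets
\begin{equation*}
{\bf F}=\frac{1}{\sqrt 2}
\begin{pmatrix}{\bf W}{\bf \Theta}&{\bf O}\\ {\bf O}&{\bf W}\end{pmatrix}
\begin{pmatrix}{\bf P}_0&{\bf O}\\ {\bf O}&{\bf P}_0\end{pmatrix}
\begin{pmatrix}{\bf R}&{\bf O}\\ {\bf O}&{\bf R}\end{pmatrix}
\begin{pmatrix}{\bf P}_1&{\bf O}\\ {\bf O}&{\bf P}_1\end{pmatrix}
\begin{pmatrix}{\bf I}&{\bf I}\\ {\bf I}&-{\bf I}\end{pmatrix},
\end{equation*}
as is checked by multiplying out the block-diagonal factors into ${\rm diag}({\bf U},{\bf V})$ and then applying the rightmost Hadamard-type block. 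Taking ${\bf F}^H$ reverses the product and sends each block-diagonal factor to its conjugate transpose, while leaving the real symmetric block $\left(\begin{smallmatrix}{\bf I}&{\bf I}\\ {\bf I}&-{\bf I}\end{smallmatrix}\right)$ fixed. Applying ${\bf F}^H$ to $({\bf x}/\sqrt2,{\bf x}/\sqrt2)^T$, the rightmost factor ${\rm diag}({\bf \Theta}^H{\bf W}^H,{\bf W}^H)$ produces $\big({\bf \Theta}^H{\rm DFT}({\bf x}),\,{\rm DFT}({\bf x})\big)^T/\sqrt2$ after invoking \eqref{dft.def}.

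The remaining steps are bookkeeping. Since $\left(\begin{smallmatrix}{\bf I}&{\bf I}\\ {\bf I}&-{\bf I}\end{smallmatrix}\right)$ commutes with every block-diagonal matrix ${\rm diag}({\bf A},{\bf A})$, I would slide it rightward past the ${\bf P}_1^H$, ${\bf R}^H$ and ${\bf P}_0^H$ blocks until it reaches the ${\rm DFT}$ data; the identity $\left(\begin{smallmatrix}{\bf I}&{\bf I}\\ {\bf I}&-{\bf I}\end{smallmatrix}\right){\rm diag}({\bf \Theta}^H,{\bf I})=\left(\begin{smallmatrix}{\bf \Theta}&{\bf \Theta}\\ {\bf I}&-{\bf I}\end{smallmatrix}\right)^H$ then supplies the final factor of the claimed formula, and the two $1/\sqrt2$ constants combine into the overall $1/2$. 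At this stage the computation yields precisely the stated expression but with ${\rm diag}({\bf P}_1^H,{\bf P}_1^H)$, ${\rm diag}({\bf R},{\bf R})^H$ and ${\rm diag}({\bf P}_0^H,{\bf P}_0^H)$ in place of the displayed ${\bf P}_1$, ${\bf R}^H$ and ${\bf P}_0$.

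I expect the only nonroutine point to be reconciling these permutation factors. The rotation block already matches, since ${\rm diag}({\bf R},{\bf R})^H={\rm diag}({\bf R}^H,{\bf R}^H)$, so the real task is to show ${\bf P}_0^H={\bf P}_0$ and ${\bf P}_1^H={\bf P}_1$; equivalently, using ${\bf P}_i^H={\bf P}_i^T={\bf P}_i^{-1}$, that the pairing permutation ${\bf P}_0$ and the sorting permutation ${\bf P}_1$ of \eqref{circulant.thm.pfeq5}--\eqref{circulant.thm.pfeq6} are symmetric involutions. This is exactly where the explicit construction of ${\bf P}_0$ and ${\bf P}_1$ from Appendix \ref{TheoremCirculant.proof} must be invoked: ${\bf P}_0$ groups the conjugate spectral pairs governed by $|P(\omega_N^{i})|=|P(\omega_N^{N-i})|$ in \eqref{symbolvalue.decomp}, and ${\bf P}_1$ reorders the resulting equal-magnitude singular-value pairs into the nondecreasing arrangement of \eqref{circulant.thm.eq5}; verifying that each of these operations is self-inverse completes the identification and finishes the proof.
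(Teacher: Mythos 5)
Your route is the same as the paper's: both substitute the SVD of Proposition \ref{singular.prop} into \eqref{jointfourier.def} and regroup by block-matrix algebra. Your factorization of ${\bf F}$, the replacement ${\bf F}^T={\bf F}^H$ (legitimate because Proposition \ref{singular.prop} gives ${\bf U},{\bf V}$ real), the commutation of $\bigl(\begin{smallmatrix}{\bf I}&{\bf I}\\ {\bf I}&-{\bf I}\end{smallmatrix}\bigr)$ with every ${\rm diag}({\bf A},{\bf A})$, and the identity $\bigl(\begin{smallmatrix}{\bf I}&{\bf I}\\ {\bf I}&-{\bf I}\end{smallmatrix}\bigr){\rm diag}({\pmb\Theta}^H,{\bf I})=\bigl(\begin{smallmatrix}{\pmb\Theta}&{\pmb\Theta}\\ {\bf I}&-{\bf I}\end{smallmatrix}\bigr)^H$ are all correct. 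So up to the stage where you obtain ${\rm diag}({\bf P}_1^H,{\bf P}_1^H)\,{\rm diag}({\bf R}^H,{\bf R}^H)\,{\rm diag}({\bf P}_0^H,{\bf P}_0^H)$ acting on the DFT data, your derivation is sound --- indeed more scrupulous about transposes than the paper's two-line computation.

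The genuine gap is the step you defer to the end: ${\bf P}_0$ and ${\bf P}_1$ are \emph{not} symmetric, so the planned verification cannot succeed. From \eqref{circulant.thm.pfeq5} with $N=5$, ${\bf P}_0=[{\bf e}_0,{\bf e}_1,{\bf e}_4,{\bf e}_2,{\bf e}_3]$ sends ${\bf e}_2\mapsto{\bf e}_4$, ${\bf e}_4\mapsto{\bf e}_3$, ${\bf e}_3\mapsto{\bf e}_2$; this is a $3$-cycle, so ${\bf P}_0^T={\bf P}_0^{-1}\neq{\bf P}_0$. (The pairing $i\leftrightarrow N-i$ is an involution on indices, but ${\bf P}_0$ is the permutation realizing the \emph{interleaved ordering} $0,1,N-1,2,N-2,\dots$, which has long cycles; likewise ${\bf P}_1={\bf P}_0{\bf Q}$ with a sorting permutation ${\bf Q}$ has no reason to be symmetric, even granting the paper's assumption ${\bf Q}^2={\bf I}$.) What your algebra actually proves is the displayed formula with ${\bf P}_0^T$ and ${\bf P}_1^T$ in place of ${\bf P}_0$ and ${\bf P}_1$, and this mismatch cannot be repaired by a symmetry lemma, because it originates in the printed theorem rather than in your computation: the paper's own proof silently writes ${\bf P}_1{\bf R}^H{\bf P}_0$ where faithful expansion of $({\bf W}{\pmb\Theta}{\bf P}_0{\bf R}{\bf P}_1)^H$ gives ${\bf P}_1^T{\bf R}^H{\bf P}_0^T$. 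The correct way to finish is to state the conclusion with the transposed (equivalently, inverse) permutation blocks, or to redefine ${\bf P}_0,{\bf P}_1$ as the transposes of the matrices in \eqref{circulant.thm.pfeq5} and \eqref{circulant.thm.pfeq6}, rather than to assert a self-inverse property that already fails at $N=5$.
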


The proof is shown in Appendix \ref{TheoremCirculant.proof}.

\section{Graph Fourier transform and graph Laplacian}
\label{GftLap.section}

Let $\G=(V, E)$ be
a directed graph of order  $N$
containing no loops or multiple edges, and
$\U=[{\bf u}_0, \ldots, {\bf u}_{N-1}]$, $\V=[{\bf v}_0, \ldots, {\bf v}_{N-1}]$
and $\pmb \Sigma={\rm diag} (\sigma_0, \ldots, \sigma_{N-1})$
be the orthogonal matrices  and  diagonal matrix
in the SVD \eqref{svd.def}
of   the associated Laplacian  ${\bf L}$.
In this paper, we propose to use singular values  $\sigma_i, 0\le i\le N-1$, of Laplacian ${\mathbf L}$ to carry the
graph frequencies of the GFT in Definition \ref{fourier.def}, and to take
the columns ${\bf u}_i$ and ${\bf v}_i, 0\le i\le N-1$, of orthogonal matrices ${\bf U}$ and ${\bf V}$ as
its left/right frequency components.
We observe that  frequencies  of the  proposed   GFT
have similar pattern to the ones  in
\cite{Singh16, Sardellitti17}, see Figure \ref{directedcyclefrequencies.fig}.
  Based on  the SVD \eqref{svd.def},  we propose an effective algorithm \eqref{frequencycomponents.eq}
   to evaluate frequencies and
   left/right frequency components  and hence the GFT of a graph signal.
 In Remark \ref{comparison.remark} of this section, we  compare the proposed  SVD-based  GFT with the GFTs
 in
 \cite{Sardellitti17, Shafipour19}
to be defined by solving some constrained optimization problems.

Let ${\cal S}(\L)$  be the self-adjoint dilation \eqref{mathL.decomp} of the Laplacian matrix  $\L$ on the directed graph ${\mathcal G}$,
and ${\bf F}$ be  as in
\eqref{Fouriermatrix.def}.
By \eqref{mathL.decomp} and  \eqref{Fouriermatrix.def}, we have
\begin{equation}\label{comm.eq.1}
{\bf  F}^T{\cal  S}(\L)=\begin{pmatrix}
{\pmb\Sigma}&{\bf O}\\
{\bf O} &-{\pmb\Sigma}\end{pmatrix}{\bf F}^T,
\end{equation}
where the diagonal matrix $\pmb\Sigma={\rm diag}(\sigma_0, \ldots, \sigma_{N-1})$ has  singular values of the Laplacian ${\bf L}$
as its diagonal entries.
Recall that  in the undirected graph setting,  the SVD \eqref{svd.def} of the Laplacian ${\mathbf L}$ is the same as
its eigendecomposition \eqref{undirectedLaplacian.def1} and the
proposed  SVD-based GFT is essentially the GFT \eqref{undirectedLaplacian.def2} by \eqref{undirectfourier.def}.
So following the terminology of GFT on undirected graphs,
%
we use
 \begin{equation}\label{frequency.def}
    0\le\sigma_0\le\sigma_1\le\ldots\le\sigma_{N-1}
    \end{equation}
    to carry the notion of   {\em  frequencies} of the  GFT in Definition \ref{fourier.def}, see Figure \ref{piecewise.fig} for frequencies of a
    weighted Minnesota traffic graph of size
$N=2640$.

  By \eqref{Laplacian.def0},
 the  GFT   in Definition \ref{fourier.def} has zero as its lowest frequency, i.e.,
    \begin{equation}
    \sigma_0=0.
    \end{equation}
Shown in Figure \ref{directedcyclefrequencies.fig} are a  directed unweighted graph of size $15$ containing three clusters connected with a directed cycle \cite[Fig. 1(c)]{Sardellitti17}, and
its  frequencies
obtained by  the splitting orthogonality constraint method (SOC)  \cite[Algorithm 1]{Sardellitti17},
the proximal alternating minimized augmented Lagrangian  methods (PAMAL)  \cite[Algorithms 2 and 3]{Sardellitti17}, the Jordan decomposition method (Jordan)
in  \eqref{Jordan.gft.def}  \cite{sandryhaila2013, sandryhaila14,  deri2017, Domingos20},   and the  SVD-based approach proposed in this paper.
It is observed that frequencies  \eqref{Fouriermatrix.def} obtained by our  approach have similar pattern to the ones in  SOC, PAMAL and Jordan. 

\begin{figure}[t]
\begin{center}
\hskip .05in \includegraphics[width=62mm, height=46mm]{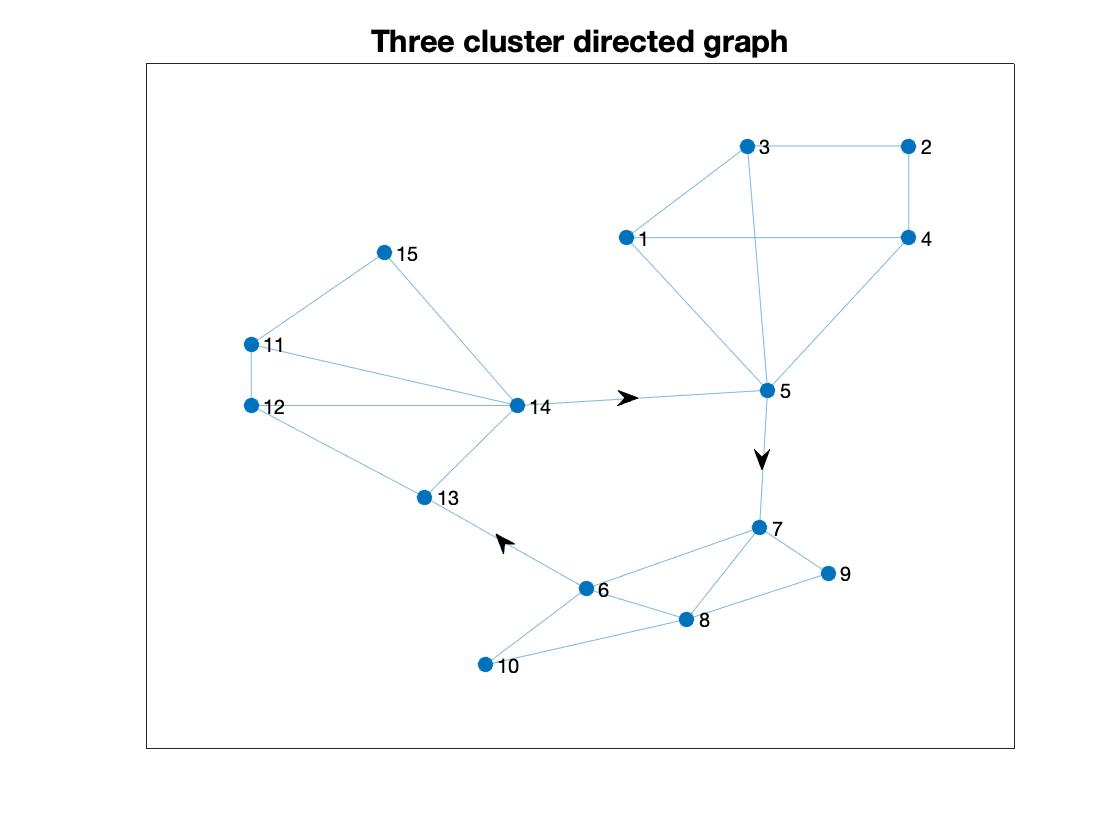} \hskip .05in
\includegraphics[width=62mm, height=46mm]{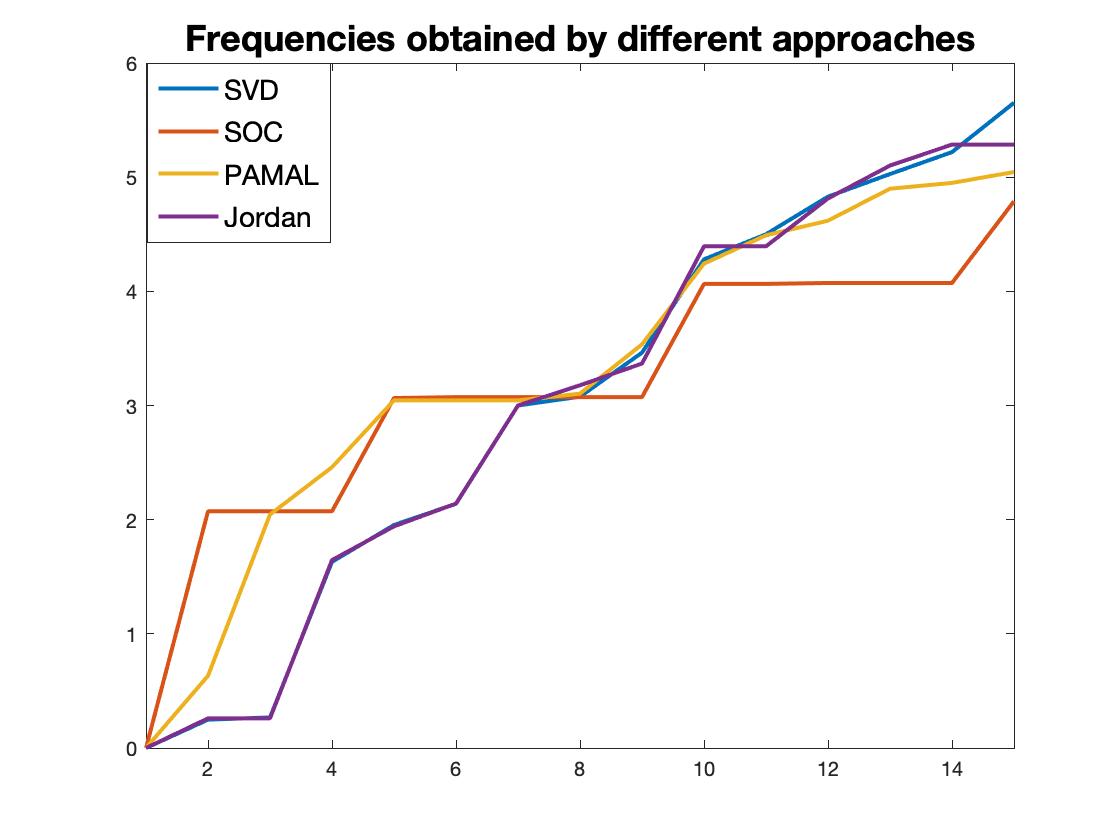}\\
\includegraphics[width=31mm, height=28mm]{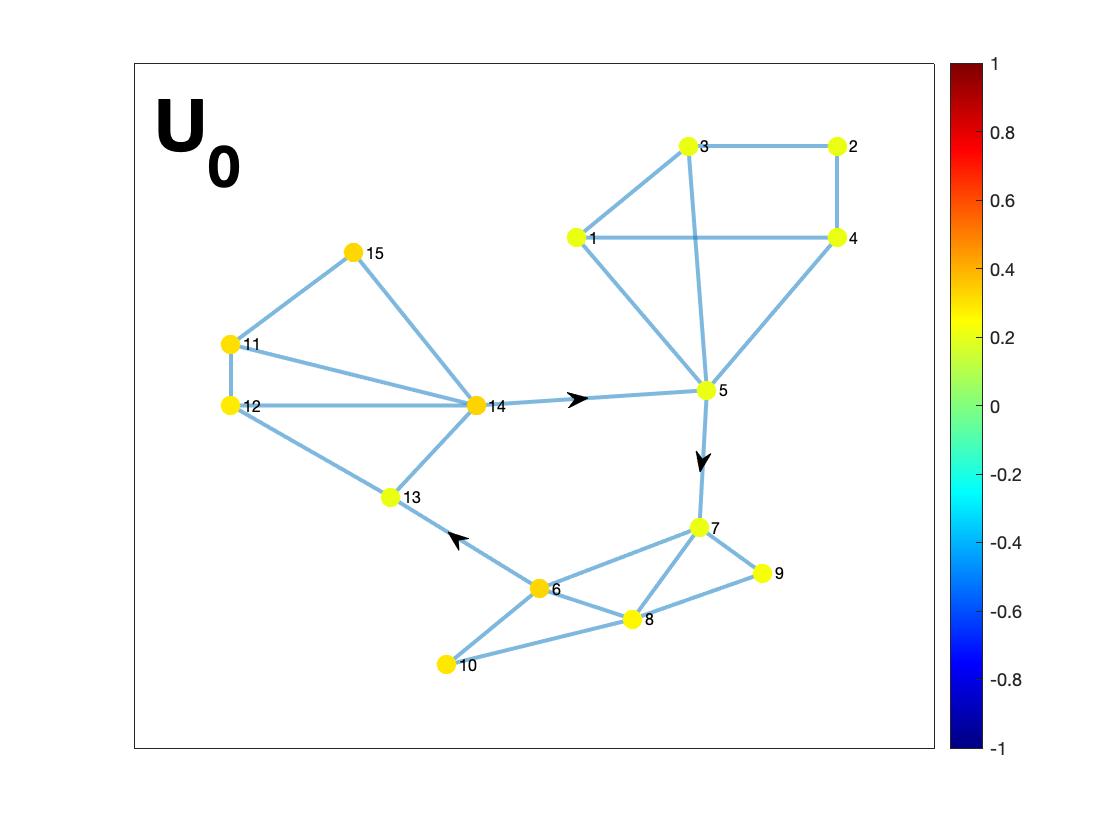}
\includegraphics[width=31mm, height=28mm]{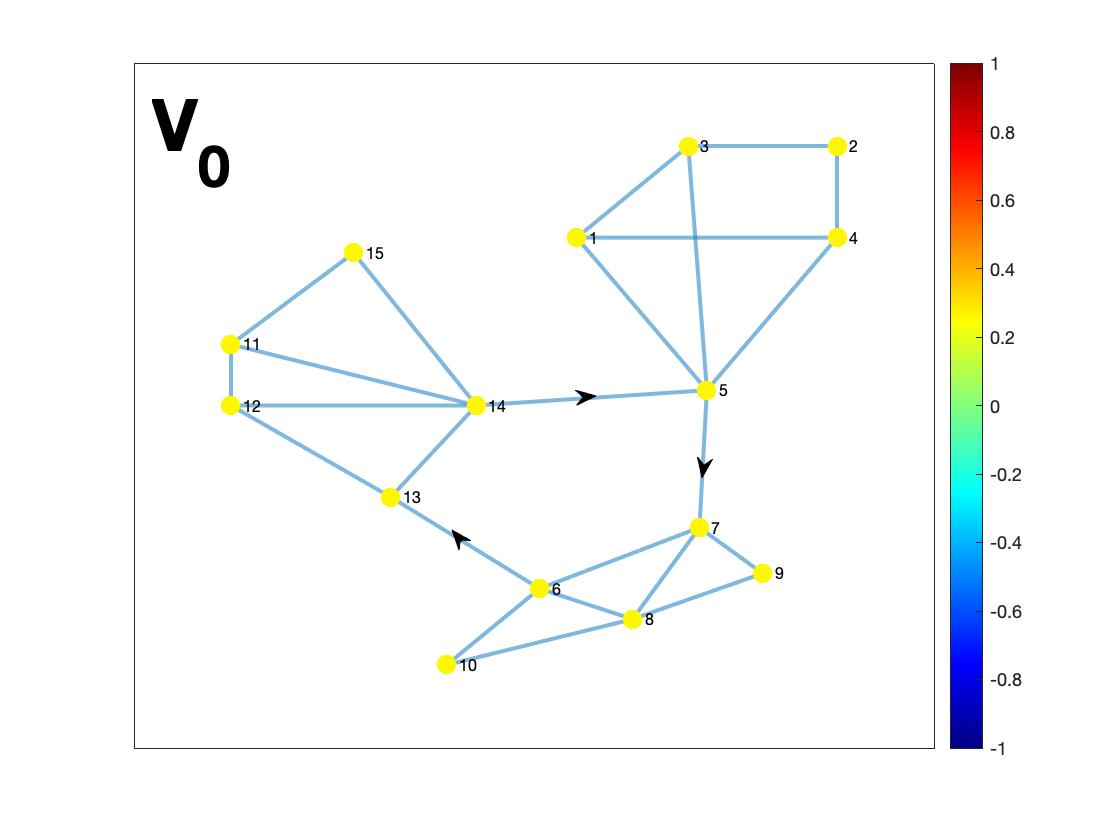}
\includegraphics[width=31mm, height=28mm]{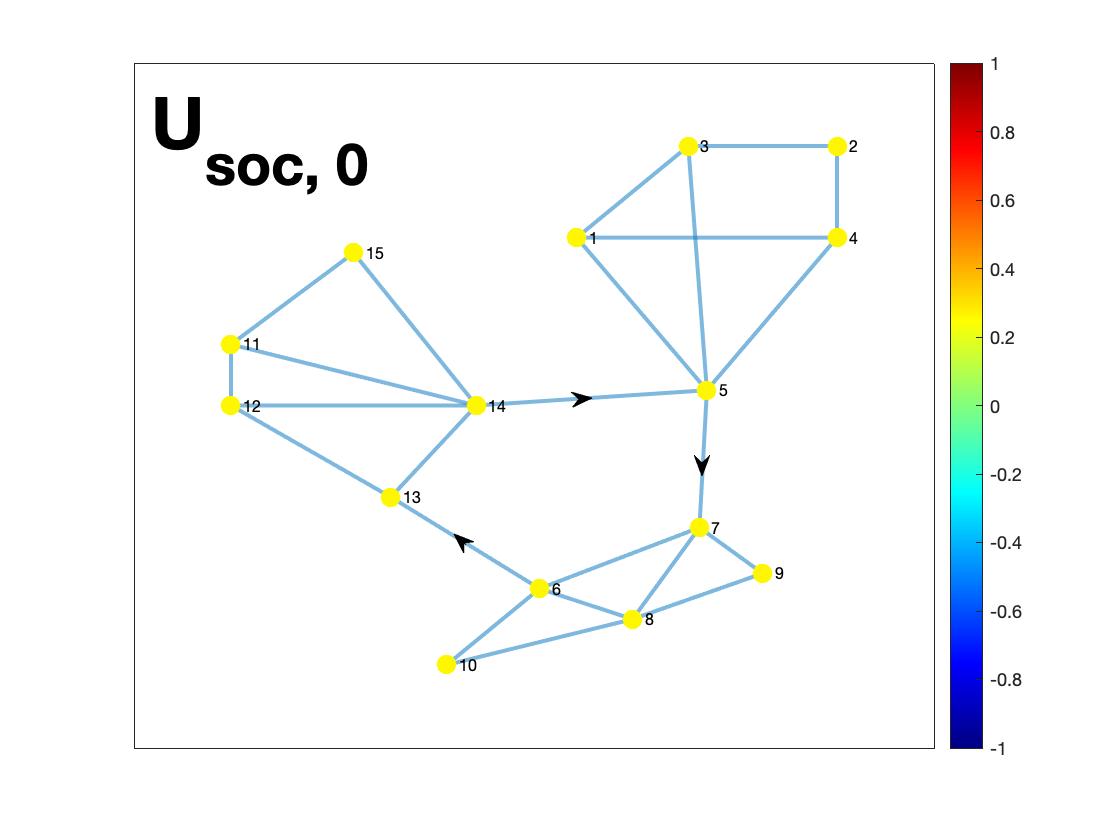}
\includegraphics[width=31mm, height=28mm]{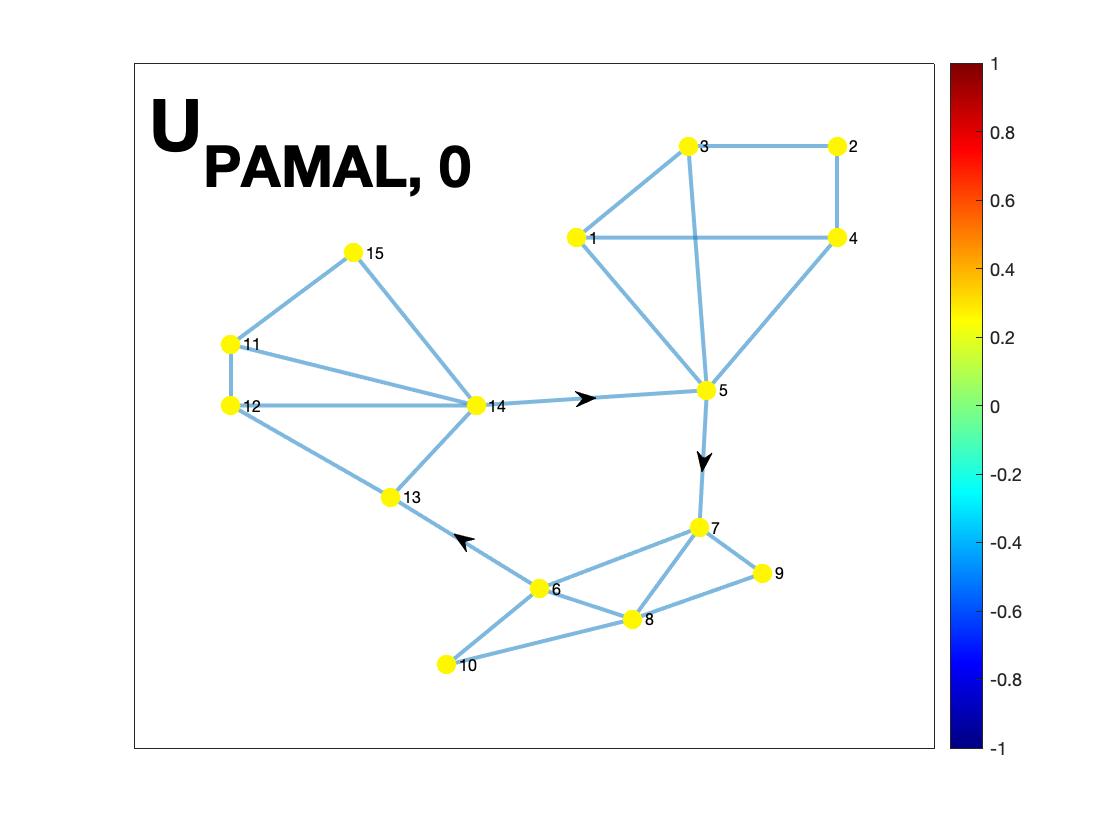}\\
\includegraphics[width=31mm, height=28mm]{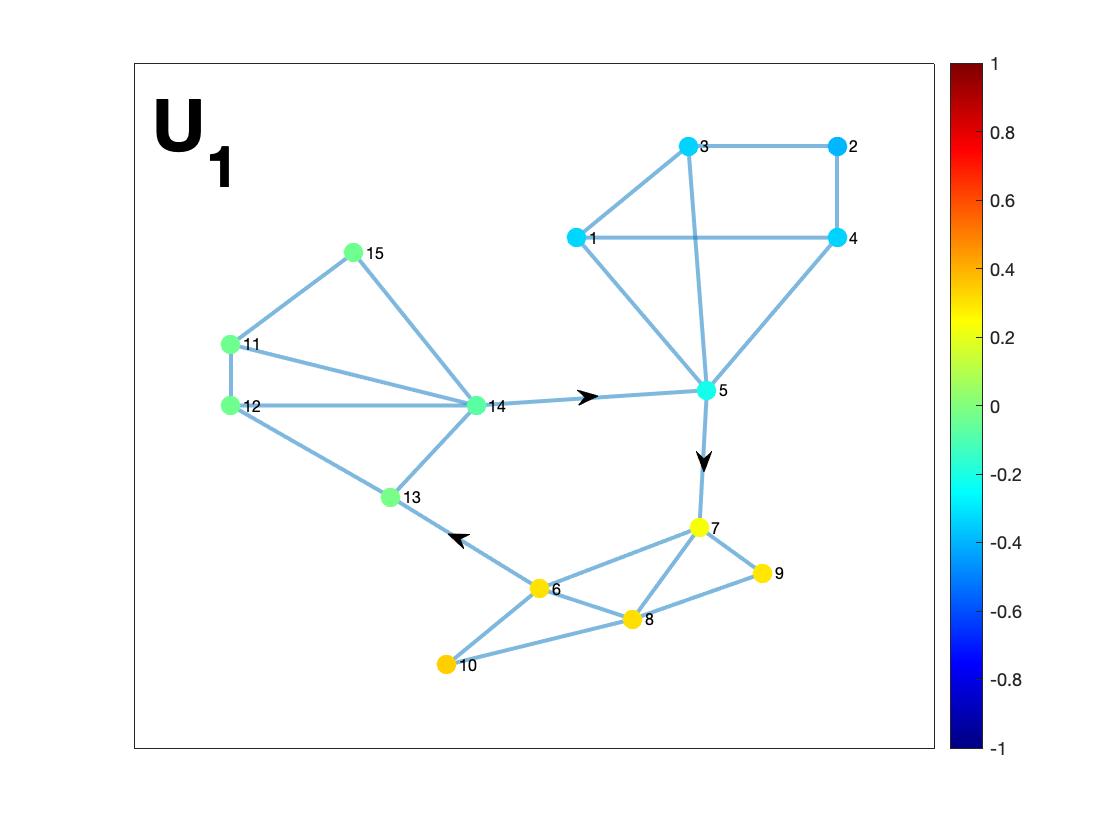}
\includegraphics[width=31mm, height=28mm]{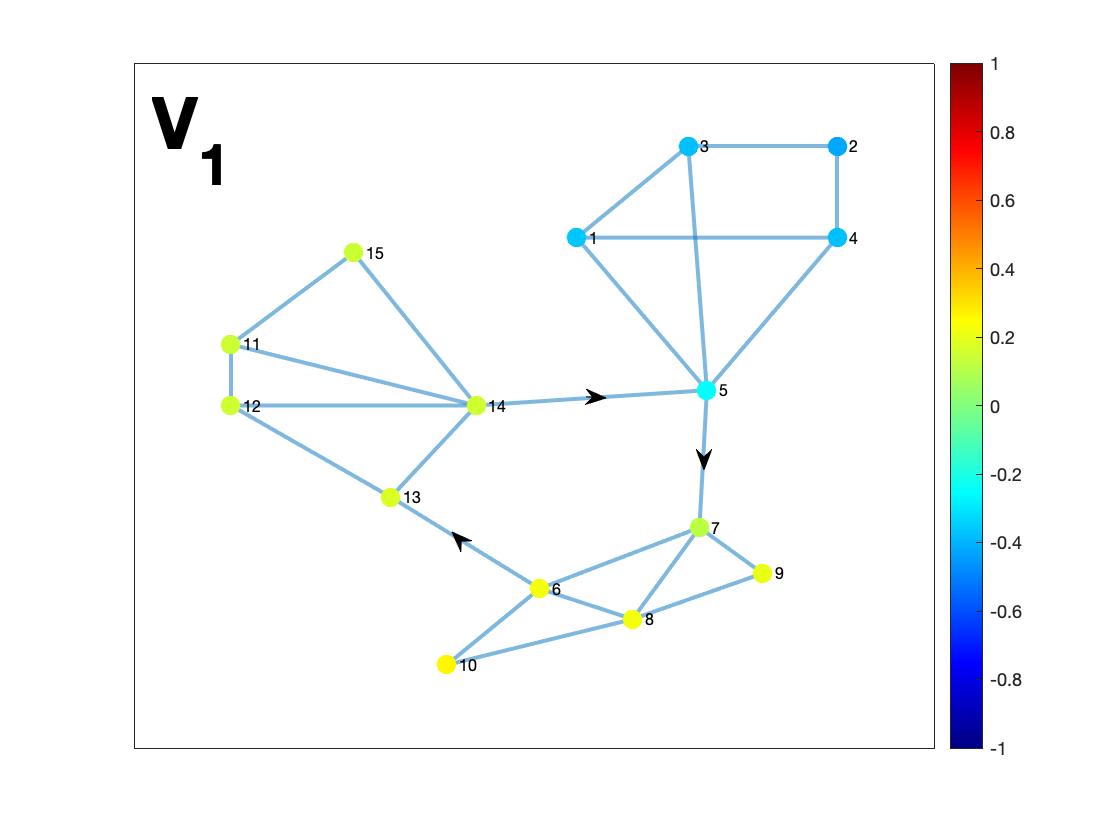}
\includegraphics[width=31mm, height=28mm]{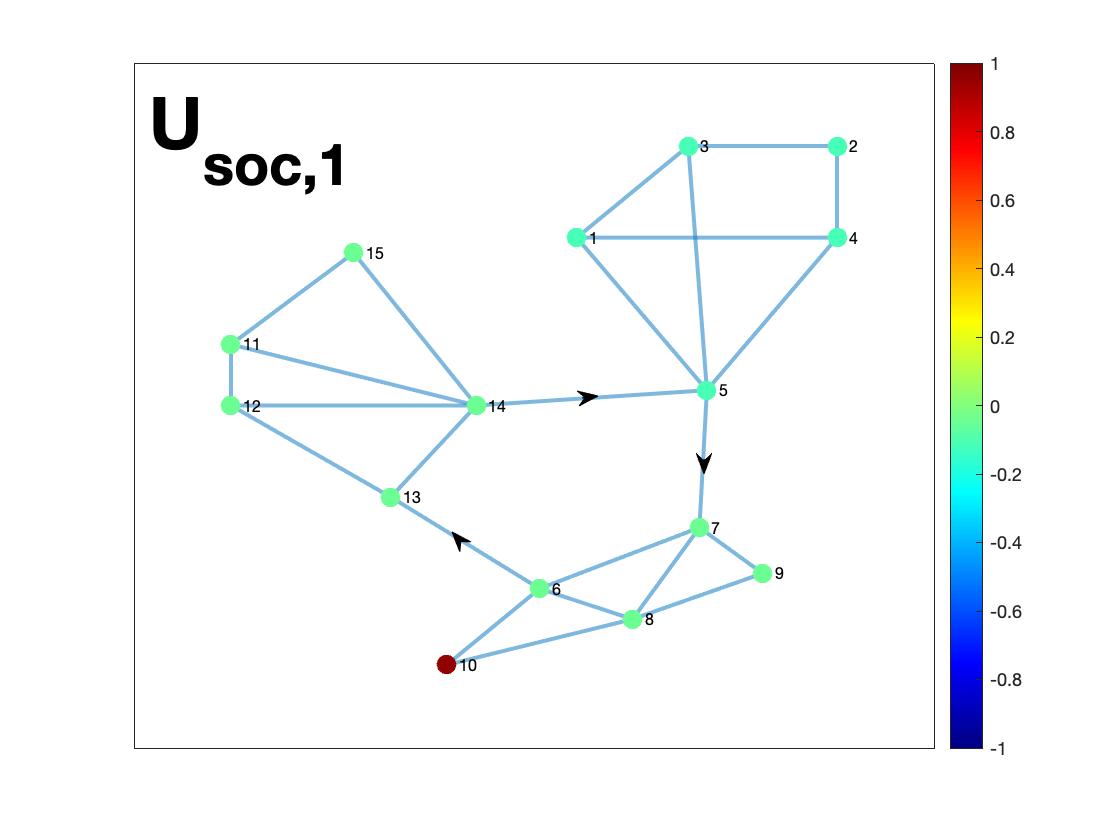}
\includegraphics[width=31mm, height=28mm]{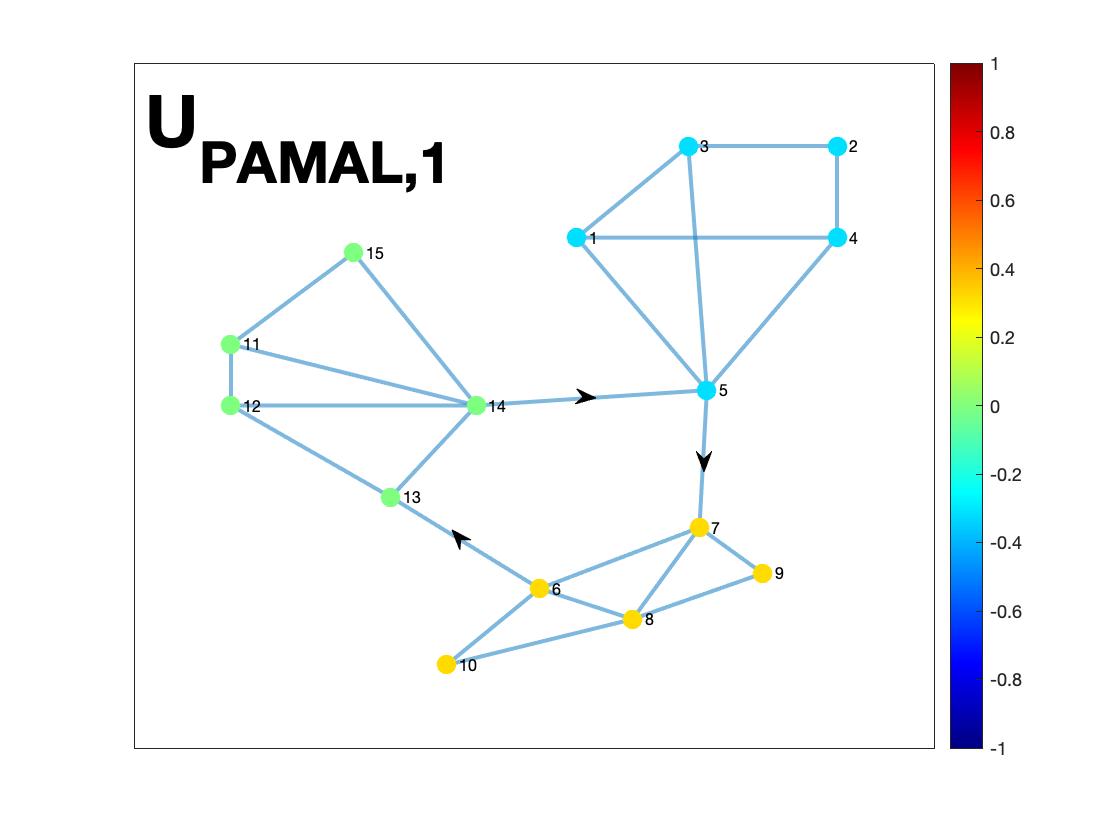}\\
\includegraphics[width=31mm, height=28mm]{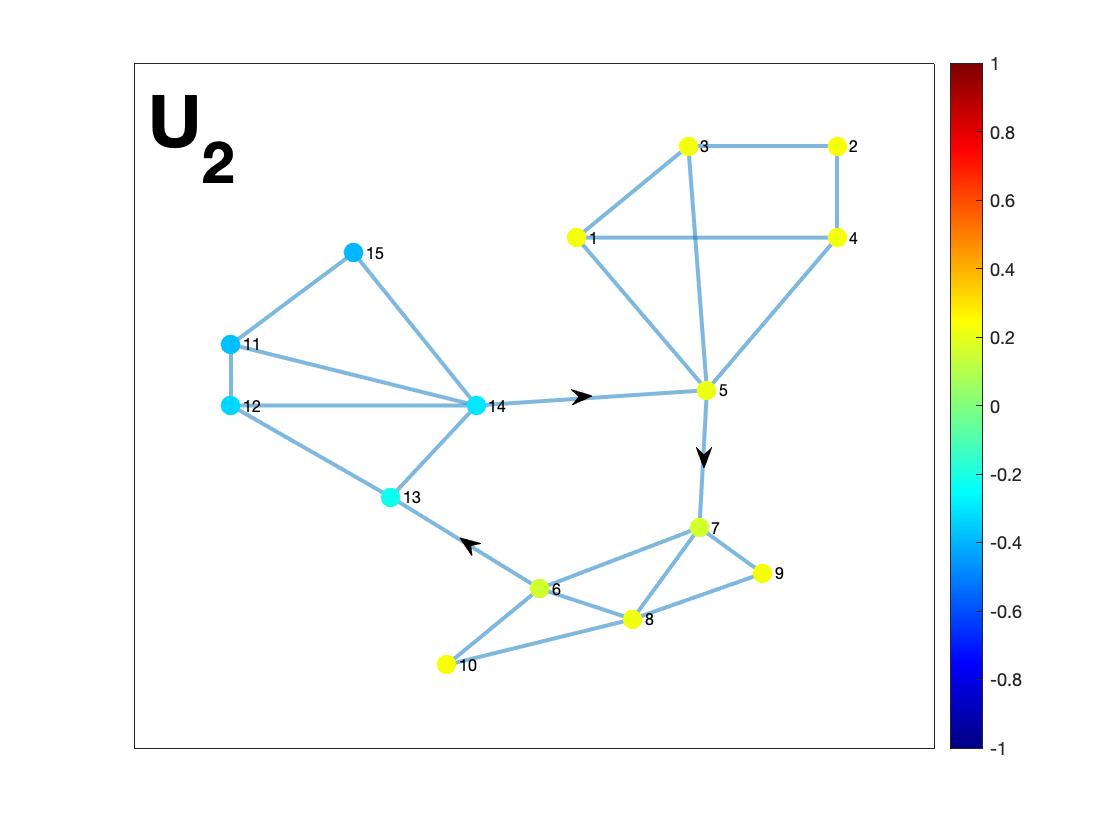}
\includegraphics[width=31mm, height=28mm]{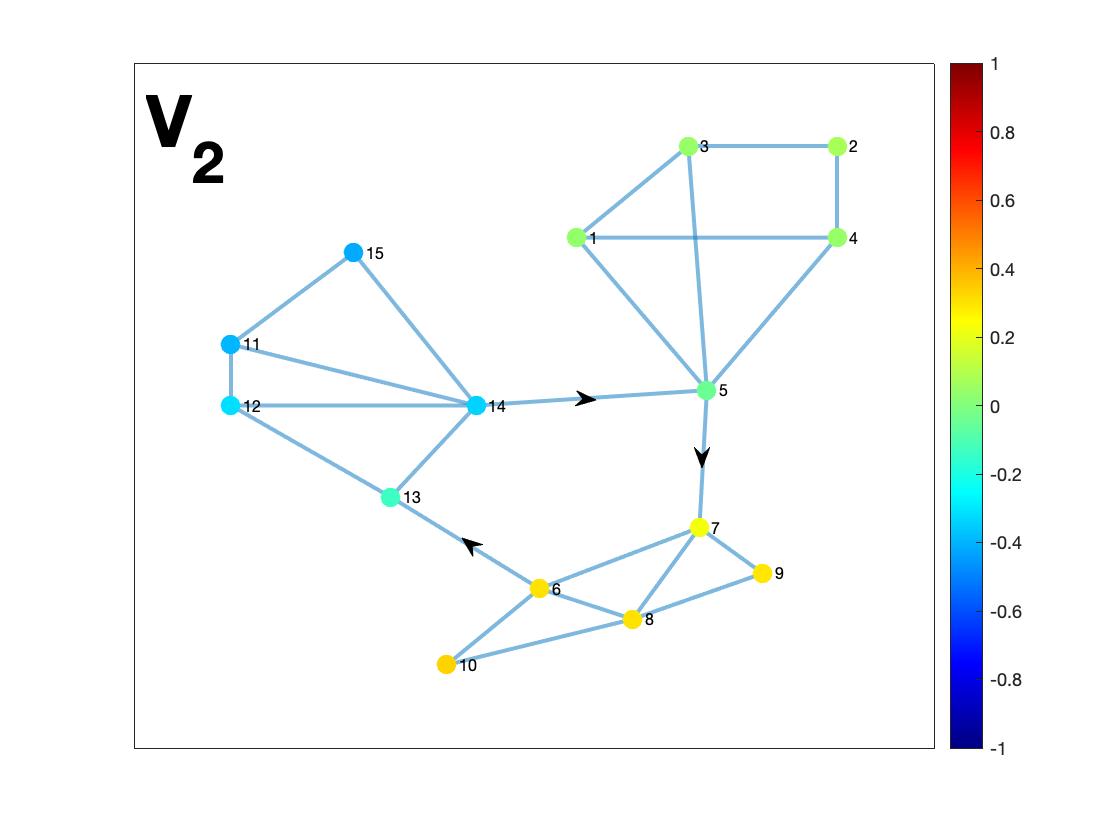}
\includegraphics[width=31mm, height=28mm]{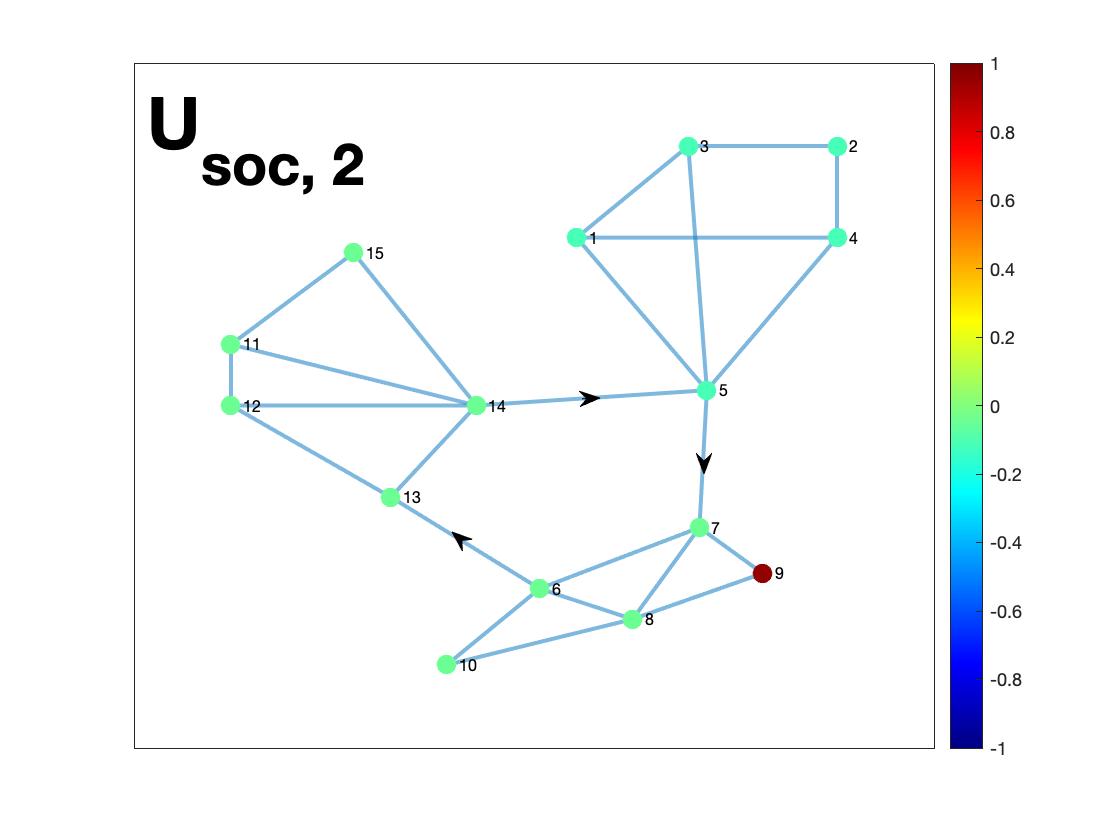}
\includegraphics[width=31mm, height=28mm]{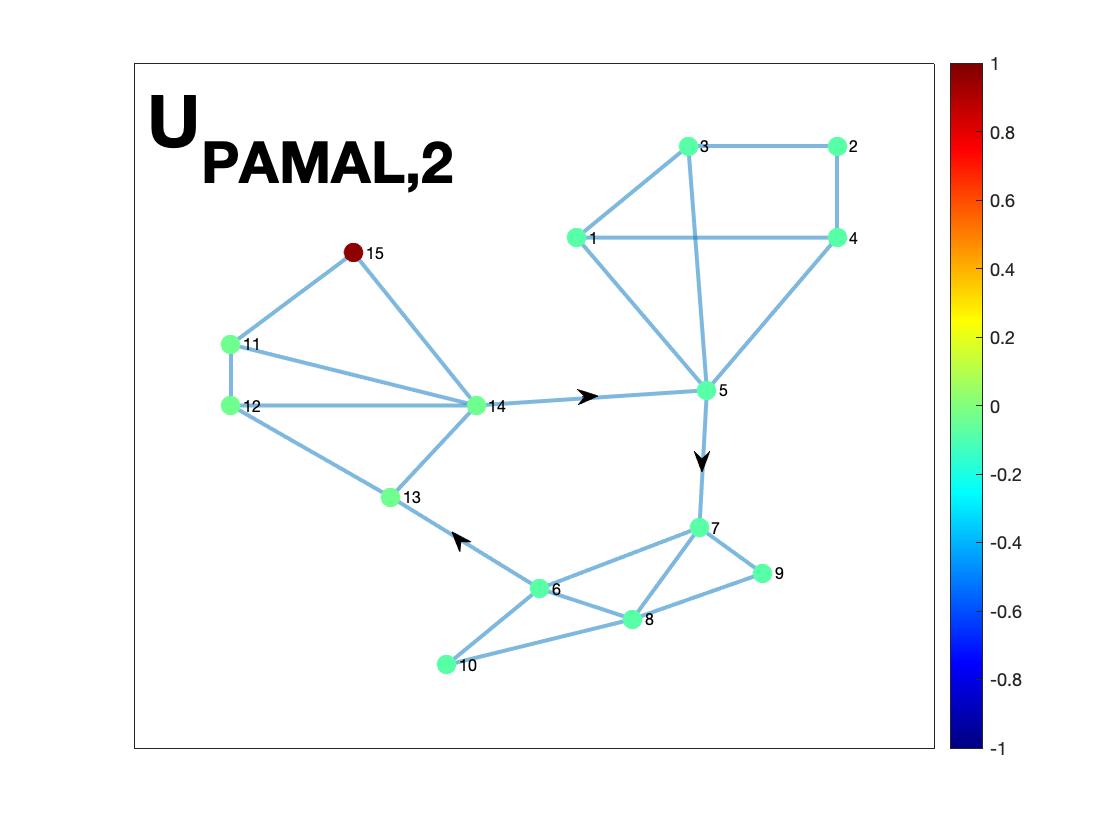}\\
\includegraphics[width=31mm, height=28mm]{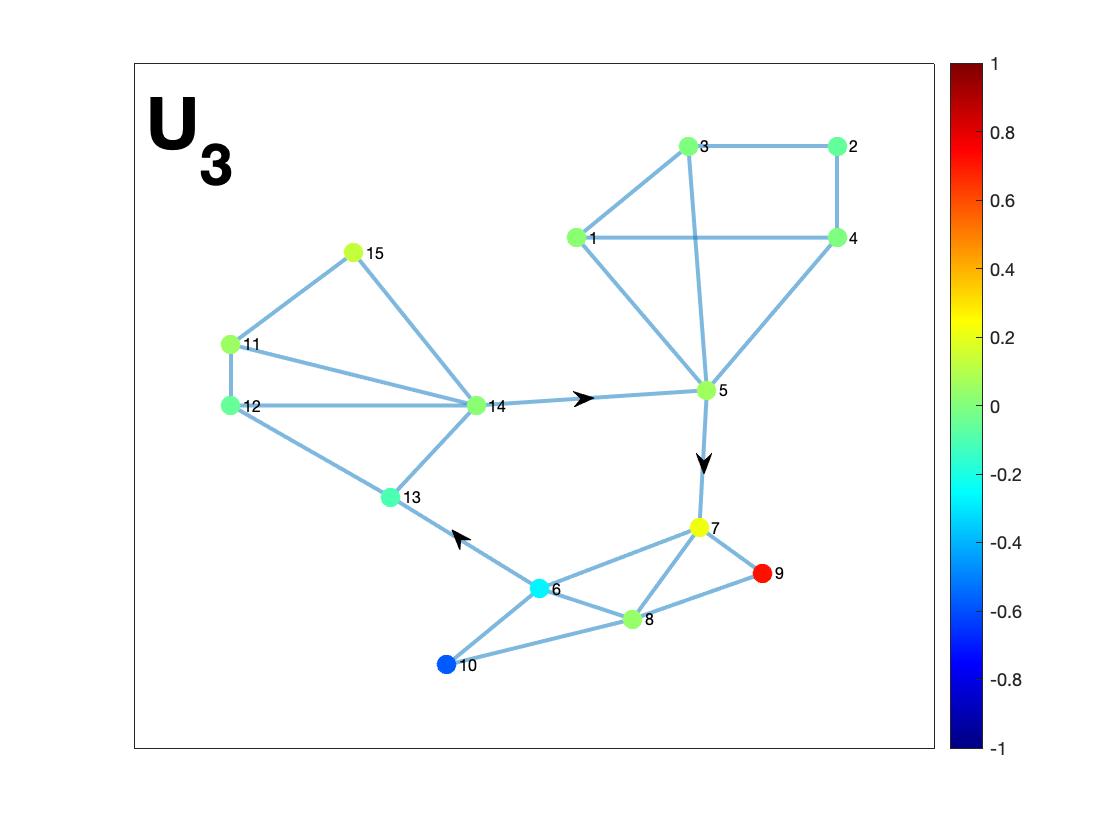}
\includegraphics[width=31mm, height=28mm]{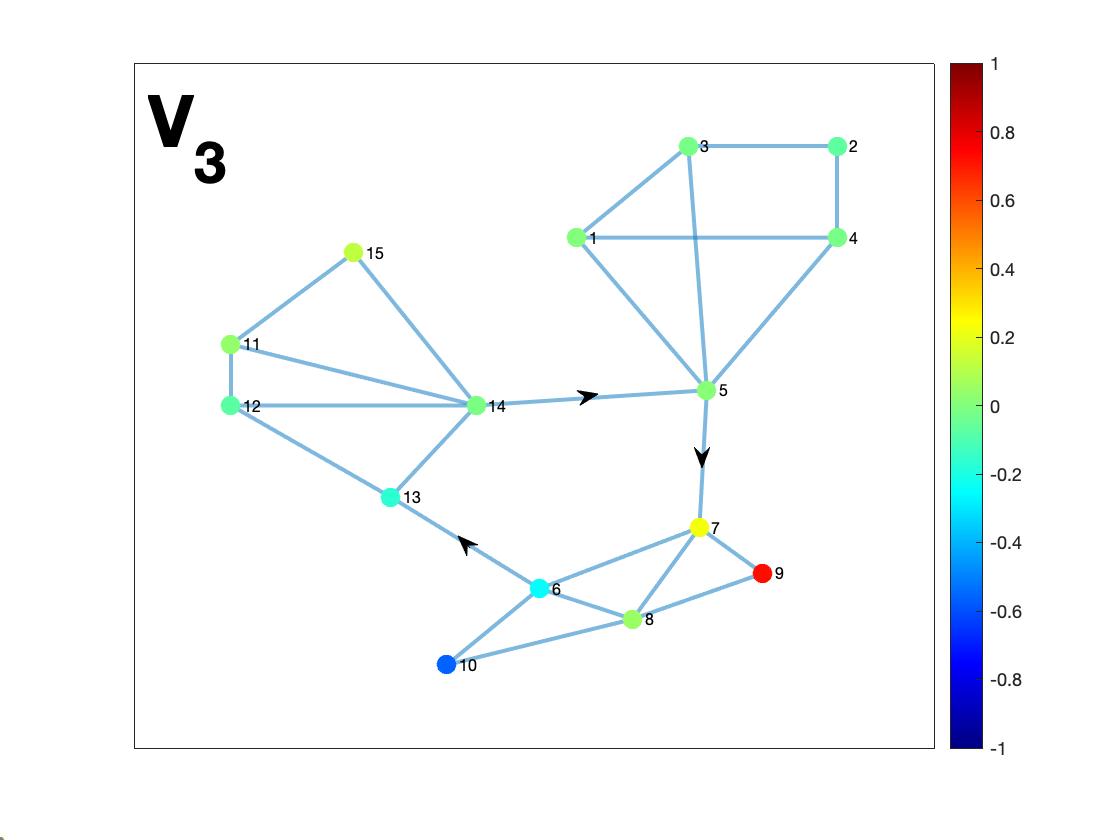}
\includegraphics[width=31mm, height=28mm]{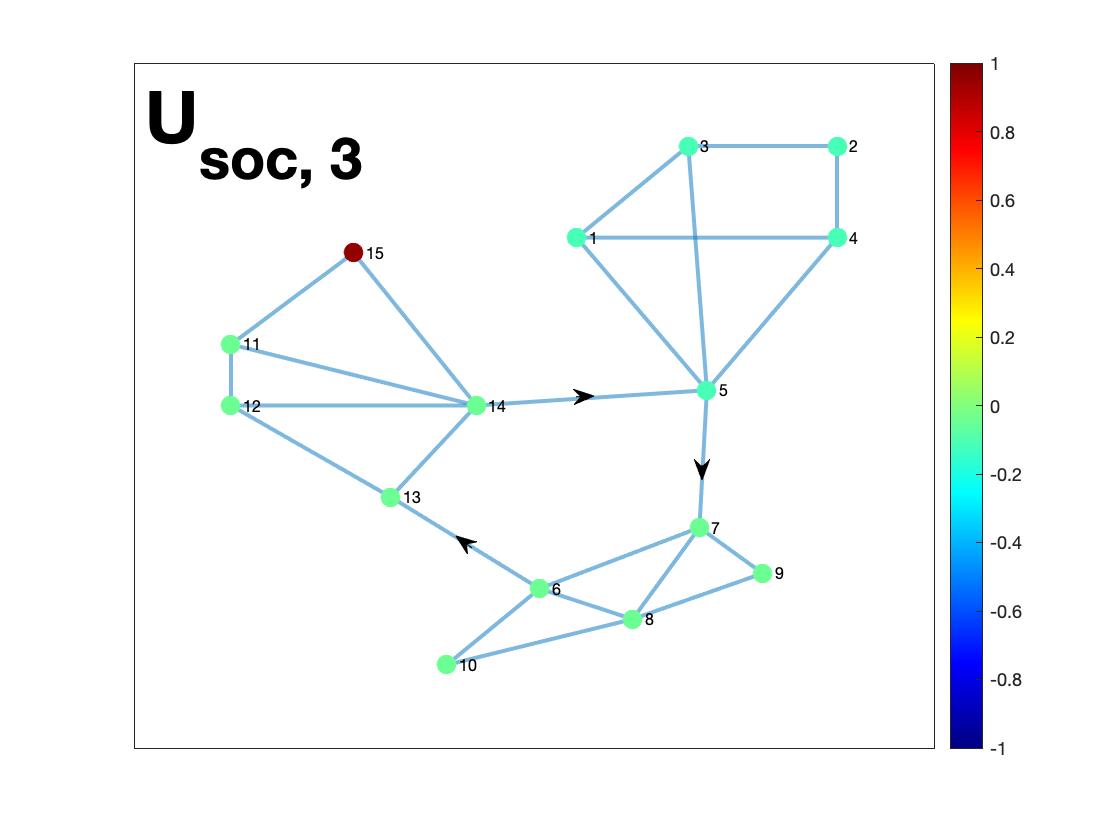}
\includegraphics[width=31mm, height=28mm]{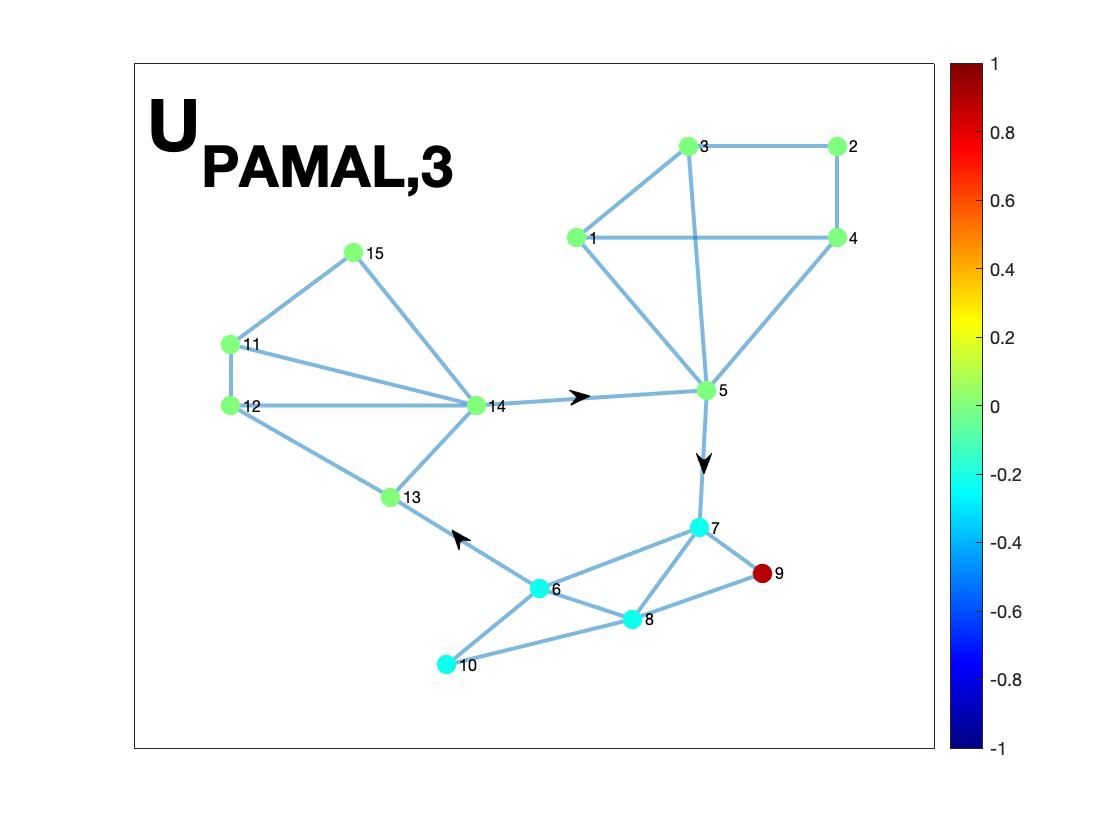}\\
\caption{Plotted on the top left is a directed unweighted graph with three clusters of 5 knots connected with a  directed cycle \cite[Fig. 1(c)]{Sardellitti17},
and on  the top right are its 15  frequencies via the SOC, PAMAL, Jordan and the proposed SVD approach.
On the next four rows from left to right are left frequency components, right frequency components of our proposed  GFT,
the SOC and PAMAL frequency components associated with $i$-th frequency, where $i=0,  1, 2, 3$ from top to bottom.
  }
\label{directedcyclefrequencies.fig}
  \end{center}
\end{figure}

For $0\le i\le N-1$, we obtain from the SVD \eqref{svd.def} that
${\bf u}_i$ and ${\bf v}_i$ in \eqref{UV.component} are the left and right eigenvectors of  ${\bf L}^T{\bf L}$ associated with the eigenvalue $\sigma_i^2$, i.e.,
\begin{equation}\label{leftrighteigenvaue.eq}
{\bf u}_i^T {\bf L}^T{\bf L}=\sigma_i^2 {\bf u}_i^T\ \ {\rm and}\  \  {\bf L}^T{\bf L} {\bf v}_i=\sigma_i^2 {\bf v}_i,\  0\le i\le N-1,
\end{equation}
\cite{magoarou2018, emirov2022}.
Then we call ${\bf u}_i$ and ${\bf v}_i, 0\le i\le N-1$, as the  {\em left  and  right  frequency components} associated with frequency $\sigma_i$, or $i$-th left (right) frequency components in short,  respectively.
 By \eqref{Laplacian.def0},
 the right frequency component associated with frequency zero can be selected as follows,
\begin{equation}\label{zerofrqeuncycomponent.right}
{\bf v}_0= N^{-1/2} {\bf 1}.
\end{equation}
The   left frequency component  ${\bf u}_0$ associated with frequency zero is not  always  a multiple of the constant signal ${\bf 1}$.  One may verify that
it can be so chosen that
\begin{equation}\label{zerofrqeuncycomponent.right}
{\bf u}_0= N^{-1/2} {\bf 1}
\end{equation}
if and only if
${\mathcal G}$ is an  Eulerian graph, in which
the in-degree and out-degree are the same at each vertex. 

In the undirected graph setting, the left and right  frequency components can be selected as the same, and they can be obtained via solving a family of constrained optimization problems inductively, 
\begin{eqnarray}\label{minmaxprinciple.eq00}
{\bf u}_i={\bf v}_i & \hskip-0.08in  = & \hskip-0.08in \arg \min_{{\bf x}\in W_i^\perp \ {\rm with} \ \|{\bf x}\|_2=1}  {\rm QV}({\bf x}) \nonumber\\
 & \hskip-0.08in  = & \hskip-0.08in \arg \min_{{\bf x}\in W_i^\perp \ {\rm with} \ \|{\bf x}\|_2=1}\|\L{\bf x}\|_2, 
\end{eqnarray}  with  the initial ${\bf v}_0=N^{-1/2}\bf 1$,
where   quadratic variation ${\rm QV}({\bf x})$ of a graph signal ${\bf x}$ is given in  \eqref{TV.def00}, 
and  for $1\le i\le N-1$, $W_i^\perp$ is the orthogonal complement of the space spanned by
${\bf v}_j, 0\le j\le i-1$.
Denote the average  and standard deviation of a vector ${\bf x}\in {\mathbb R}^N$ by
$$m({\bf x})=N^{-1} {\bf 1}^T{\bf x} \ \ {\rm  and} \ \ {\rm SD}({\bf x})= N^{-1/2} \|{\bf x}-m({\bf x}){\bf 1}\|_2,$$
the null space  of the transpose of Laplacian ${\bf L}$
by ${\rm ker}({\bf L}^T)$, and the dimension of a linear space $W$ by $\dim W$.
Based on the standard algorithm to find SVD and Courant-Fischer-Weyl min-max principle, we can apply the  following approach to construct
frequencies $\sigma_i$ and frequency components ${\bf v}_i$ and  ${\bf u}_i, 0\le i\le N-1$, of the proposed GFT:
 \begin{subequations}\label{frequencycomponents.eq}
\begin{equation}\label{frequencycomponents.eq0}
 \sigma_0=0 \ \ {\rm and }\ \
  {\bf v}_0 = N^{-1/2} {\bf 1}
\end{equation}
for $i=0$,
and
\begin{equation}\label{frequencycomponents.eq1}
\left\{\begin{array}{rcl}
\sigma_i &  \hskip-0.08in  = & \hskip-0.08in  \min_{{\bf x}\in W_i^\perp \ {\rm with} \ \|{\bf x}\|_2=1}\|\L{\bf x}\|_2\\
 & \hskip-0.08in = & \hskip-0.08in \min_{\dim W=i+1} \max_{{\bf x}\in W, \|{\bf x}\|_2=1} \|{\bf L}{\bf x}\|_2\\
& \hskip-0.08in = & \hskip-0.08in
\max_{\dim W=N-i} \min_{{\bf x}\in W, \|{\bf x}\|_2=1} \|{\bf L}{\bf x}\|_2\\
{\bf v}_i  & \hskip-0.08in  = & \hskip-0.08in  \arg \min_{{\bf x}\in W_i^\perp \ {\rm with} \ \|{\bf x}\|_2=1}\|\L{\bf x}\|_2
\end{array}\right.\end{equation}
inductively for $1\le i\le N-1$,  and let
${\bf u}_i, 0\le i\le i_0$, be an
orthonormal basis of the null space ${\rm ker}({\bf L}^T)$   with
 \begin{equation} \label{frequencycomponents.eq2}
  {\bf u}_0={\rm arg} \min_{{\bf x}\in {\rm ker}({\bf L}^T) \ {\rm with}\
 \|{\bf x}\|_2=1 \ {\rm and}\ m({\bf x})\ge 0} {\rm SD}({\bf x}),
 \end{equation}
  and define
\begin{equation} \label{frequencycomponents.eq3}
{\bf u}_i= \sigma_i^{-1} {\bf L} {\bf v}_i,\  i_0<i\le N-1,
\end{equation}
 \end{subequations}
 where  $i_0$ is the largest index such that $\sigma_{i_0}= 0$.
We remark that the left frequency component
 ${\bf u}_0$ associated with zero frequency in the above construction satisfies
 \eqref{zerofrqeuncycomponent.right}
if ${\mathcal G}$ is  an Eulerian graph,  and that $i_0=0$ if  the Laplacian ${\mathbf L}$ has rank $N-1$, or equivalently
if the graph  ${\mathcal G}$ is connected.
Shown in Figure \ref{directedcyclefrequencies.fig}
are  frequencies and frequency components of a  directed unweighted graph of size $15$ containing three clusters connected by a directed cycle \cite[Fig. 1(c)]{Sardellitti17}.
We observe that  frequency components with low frequencies may have certain clustering property and oscillation  pattern
related to the graph topology.

In addition to the quadratic variation ${\rm QV} ({\bf x})$
in \eqref{TV.def00}  and
$\|{\bf L} {\bf x}\|_2$ in \eqref {frequencycomponents.eq}, 
several directed variations have been proposed to measure the variation of a graph signal ${\bf x}=(x_i)_{i\in V}$
 along  the directed graph structure, including
\begin{equation}\label{dv.eq12}
{\rm GDV}({\bf x})=\sum_{i,j\in V}a_{ji}(x_i-x_j)_+
\end{equation}
 and
\begin{equation}\label{dv.eq11}
{\rm DV}({\bf x})=\sum_{i,j\in V}a_{ji}((x_i-x_j)_+)^2
\end{equation}
where  weight $a_{ij}$ is the $(i,j)$-th entry of the  adjacent matrix $\A$ and $t_+=\max(t, 0)$ for any real number $t\in {\mathbb R}$ \cite{Sardellitti17, Shafipour19}.
 We finish this section with some
 comparisons among  the  GFT in Definition
\ref{fourier.def} and the GFTs
 in
 \cite{Sardellitti17, Shafipour19}.

\begin{remark}\label {comparison.remark}
{\rm  In \cite{Sardellitti17}, the authors use the directed variation
${\rm GDV}({\bf x})$  in \eqref{dv.eq12}
as Lov\'{a}sz extension of the cut size function,
and define the GFT with
 frequency components ${\bf v}_i$ and frequencies
 $\lambda_i={\rm GDV}({\bf v}_i), 0\le i\le N-1$, being ordered so that $\lambda_0\le\lambda_1\le \ldots\le \lambda_{N-1}$,
where
${\bf V}=[{\bf v}_0, \ldots, {\bf v}_{N-1}]$ is the solution of the following constrained  minimization problem 
\begin{equation}\label{comparison.remark.eq1}
\min_{\bf V} \sum_{i=0}^{N-1}{\rm GDV}({\bf v}_i)
\end{equation}
subject to  ${\bf V}^T{\bf V}={\bf I}$  and  ${\bf v}_0=N^{-1/2} {\bf 1}$.
 To deal with the nonsmooth objective function and  non-convex orthogonality constraints in \eqref{comparison.remark.eq1}, the authors present
 two iterative algorithms,
 splitting orthogonality constraints (SOC for abbreviation) and proximal alternating minimization augmented Lagrange  (PAMAL for abbreviation),  to solve relaxed versions of the  constrained minimization problem \eqref{comparison.remark.eq1}, see
 \cite[Algorithms 1,  2, 3]{Sardellitti17}.
The above two implementations are more numerically stable than the method \eqref{Jordan.gft.def} based on the Jordan decomposition of Laplacian, however they
may fail to describe
different modes of variation over the directed graph.
Compared with the  GFT proposed in this paper where only the SVD of the   Laplacian  matrix of size $N\times N$ is required,
it needs to perform  SVD of a matrix of size $N\times N$ at each  iteration step of the iterative SOC and PAMAL  algorithms.

In \cite{Shafipour19}, the authors use the directed variation
${\rm DV}({\bf x})$  in \eqref{dv.eq11}  to measure the signal variation along the graph structure,
and define the GFT with
 frequency components ${\bf v}_i$ and frequencies
 $\lambda_i={\rm DV}({\bf v}_i), 0\le i\le N-1$, being ordered so that $\lambda_0\le\lambda_1\le \ldots\le \lambda_{N-1}$,
where
${\bf V}=[{\bf v}_0, \ldots, {\bf v}_{N-1}]$ is the solution of the following constrained  problem, 
\begin{equation}\label{comparison.remark.eq2a}
\min_{\bf V} \sum_{i=1}^{N-1}|{\rm DV}({\bf v}_{i})-{\rm DV}({\bf v}_{i-1})|^2
\end{equation}
subject to
${\bf V}^T{\bf V}={\bf I}$, ${\bf v}_0=N^{-1/2} {\bf 1}$ and
$
{\bf v}_{N-1}=\arg \max_{\|{\bf v}\|_2=1} {\rm DV}({\bf v})$.
Based on  the feasible method for optimization
 over the Stiefel manifold
in \cite{wen2013}, the authors develop an iterative algorithm
 to solve the  constrained problem  \eqref{comparison.remark.eq2a},
see  \cite[Algorithms 1 and 2]{Shafipour19}.
At each iteration, the proposed algorithm involves a matrix inversion and the computational  complexity is about $O(N^3)$.
Also as mentioned in \cite[Remark 1]{Shafipour19},
for the directed cycle graph (the circulant graph ${\mathcal C}_d(N, Q)$ generated by $Q=\{1\}$),
the proposed GFT in \cite{Shafipour19} fails to obtain the discrete Fourier transform in \eqref{dft.def}, cf. Theorem \ref{circulant.thm}
for our  SVD-based GFT in the directed circulant graph setting.
}
\end{remark}

\section{Graph Fourier transform on directed Eulerian  graphs}
\label{Euleriangraph.section}

Let ${\mathcal E}=(V, E)$ be an  Eulerian graph of order $N$ containing no loops or multiple edges,
 and
${\mathcal E}_t, 0\le t\le 1$, be  a   family of directed Eulerian graphs
that
 share the same  vertex set $V$ with the  graph ${\mathcal E}$ and
 have adjacent matrices ${\bf A}_t=(1-t){\bf A}+ t{\bf A}^T$ being  linear combinations of the adjacent matrices of the graph ${\mathcal E}$
and its transpose graph ${\mathcal E}^T$.
In this section, we consider  frequencies,  frequency components and  graph Fourier transforms
 on Eulerian graphs
${\mathcal E}_t, 0\le t\le 1$,
to connect the graph ${\mathcal E}$ and its transpose graph ${\mathcal E}^T$.
It is observed that frequencies and frequency components on the Eulerian graphs ${\mathcal E}_t, 0\le t\le 1$,
have certain symmetric properties, see \eqref{Euleriangraph.eq6} and Theorem \ref{Euleriangraph.cor}.
We say that frequencies $\sigma_i(t), 0\le i\le N-1$, of the Eulerian graphs ${\mathcal E}_t, 0\le t\le 1$,
are {\em simple}
if
\begin{equation} \label{Euleriangraph.thm.eq0}
0=\sigma_0(t)<\sigma_1(t)<\ldots< \sigma_{N-1}(t), \ 0\le t\le 1.
\end{equation}
In Theorem \ref{Euleriangraph.thm}, we show that
frequencies and frequency components are differentiable about $0\le t\le 1$,
if   frequencies  of the Eulerian graphs ${\mathcal E}_t, 0\le t\le 1$,
are simple.
%
To quantify and
measure the degree of asymmetry of the Eulerian graph ${\mathcal E}$, we define
\begin{equation} \label{Euleriangraph.eq7}
\sigma_{\rm asym}=\max_{\|{\bf x}\|_2=1} \|({\mathbf L}-{\bf L}^T){\bf x}\|_2,
\end{equation}
which is the same as the largest singular value  of ${\mathbf L}-{\bf L}^T$
 \cite{Li12}.
From the estimation in Theorem \ref{Euleriangraph.thm}, we  conclude that frequencies and frequency components have slow  variations to $0\le t\le 1$ when
$\sigma_{\rm asym}$  is small, see \eqref{perturbation.eq00} and \eqref{perturbation.eq01}.

Recall that an Eulerian graph  ${\cal E}$ has the same in-degree and out-degree  at each vertex, the Laplacians  ${\bf L}_t$ of the graphs ${\mathcal E}_t$
are given
by
\begin{equation}  \label{Euleriangraph.eq2}
{\mathbf L}_t= (1-t) {\bf L}+t {\bf L}^T,
\end{equation}
and satisfy
 \begin{equation}\label{Euleriangraph.lap.prop1}
{\mathbf L}_t{\bf 1}={\bf 0},  \  0\le t\le 1.
\end{equation}
By the continuity of  the Laplacian ${\mathbf L}_t, 0\le t\le 1$,
we can find an  SVD
 \begin{equation}  \label{Euleriangraph.eq3}
 \L_t =\U_t {\pmb\Sigma}_t\V_t^T
 \end{equation}
with initials $({\bf U}_0, {\bf V}_0, {\pmb \Sigma}_0)=({\bf U}, {\bf V}, {\pmb \Sigma})$
such that
orthogonal matrices ${\bf U}_t, {\bf V}_t$ and diagonal matrices
\begin{equation}
\label{Euleriangraph.eq4}
{\pmb\Sigma}_t={\rm diag}(\sigma_0(t), \ldots, \sigma_{N-1}(t))\end{equation}
 of singular values of Laplacians ${\mathbf L}_t$  in a nondecreasing order are continuous  about $0\le t\le 1$,
 where   ${\mathbf L}=\U{\pmb\Sigma}\V^T$ is the SVD \eqref{svd.def}
of the Laplacian ${\mathbf L}$.
 Using the above SVD of ${\mathbf L}_t$, we can define GFT
 ${\mathcal F}_t$ of a  signal ${\bf x}$ on the graph ${\mathcal E}_t$ (and also on ${\mathcal E}={\mathcal E}_0$ as they have the same vertex set) by
  \begin{equation}  \label{Euleriangraph.eq4+}
  {\mathcal F}_t{\bf x}= \frac{1}{2} \begin{pmatrix} {\bf U}_t^T{\bf x}+ {\bf V}_t^T {\bf x}\\
 {\bf U}_t^T{\bf x}- {\bf V}_t^T {\bf x}
 \end{pmatrix}, \ 0\le t\le 1.
 \end{equation}

 By \eqref{Euleriangraph.lap.prop1}, we have
 \begin{equation} \label{Euleriangraph.eq5}
 \sigma_0(t)=0, \ 0\le t\le 1.
 \end{equation}
 By the SVD \eqref{Euleriangraph.eq3}, $(\sigma_i(t))^2, 0\le i\le N-1$, are
 eigenvalues of matrices
${\mathbf L}_t^T {\mathbf L}_t$ and
${\mathbf L}_t{\mathbf L}_t^T$.  This together with the nonnegative nondecreasing order of singular values $\sigma_i(t), 0\le i\le N-1$, and the observation
that $ {\mathbf L}_{1-t}{\mathbf L}_{1-t}^T= {\mathbf L}_t^T {\mathbf L}_t, \ 0\le t\le 1$, proves that
\begin{equation} \label{Euleriangraph.eq6}
\sigma_i(1-t)=\sigma_i(t), \ 0\le t\le 1.
\end{equation}
Shown in  Figure \ref{euler_freq.fig}
are the graph frequencies  $\sigma_i(t), 0\le i\le N-1$, of Eulerian graphs  ${\mathcal E}_t, 0\le t\le 1$ of order $N=64$.
\begin{figure}[t]
\centering
\includegraphics[width=63mm, height=58mm]{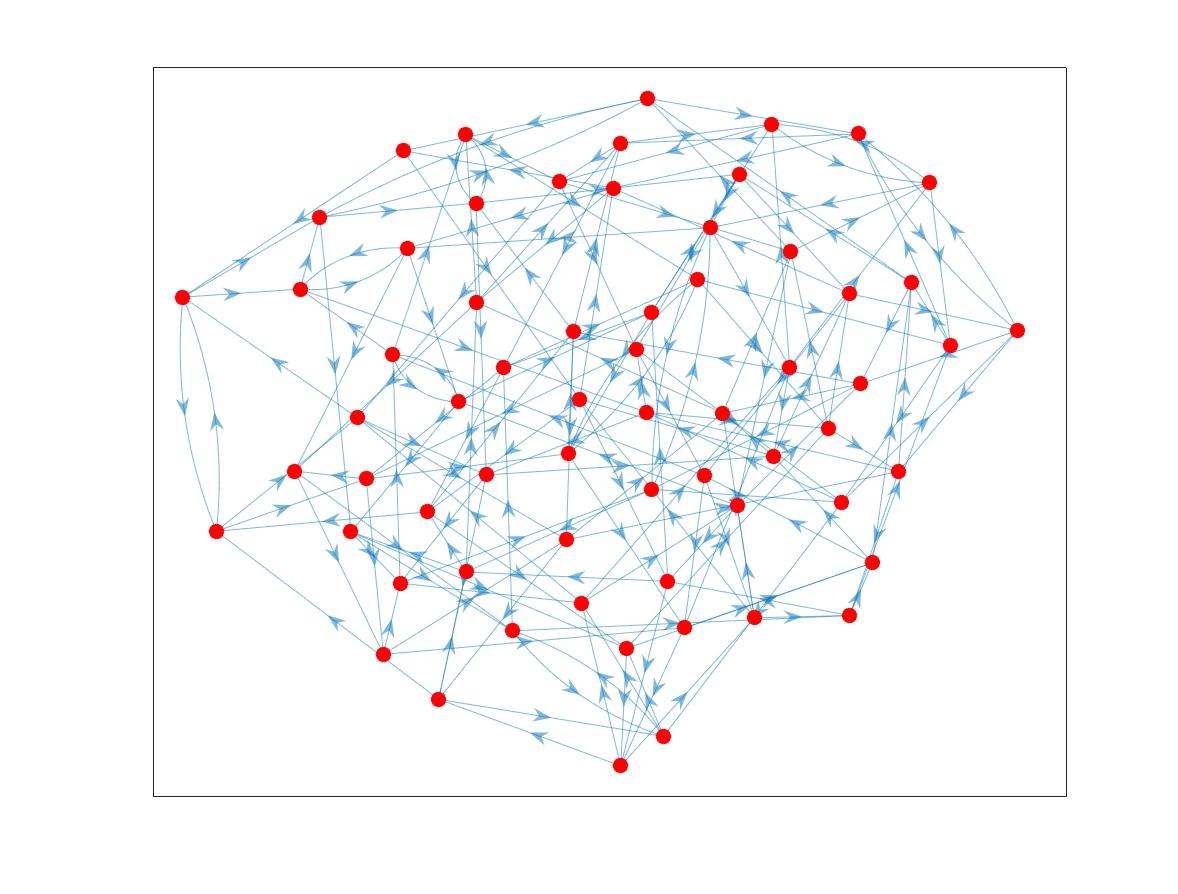}
\includegraphics[width=63mm, height=58mm]{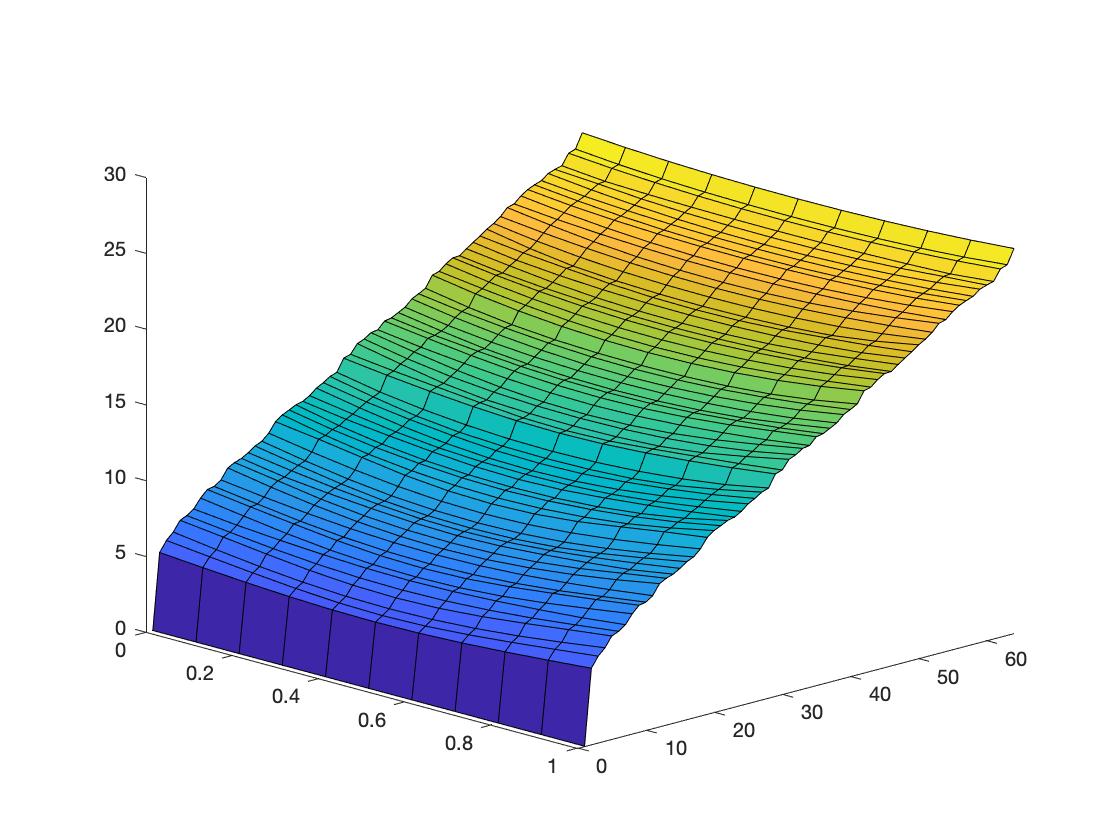}
\caption{Plotted on the left is an Eulerian graph of order $N=64$  with the associated Laplacian  $\L$  being a double stochastic matrix
with $\sigma_{\rm asym}=20.2248$. On the right is
 the frequencies $\sigma_i(t), 0\le i\le N-1$, of graph Laplacian matrices $\L_t=(1-t)\L+t\L^T$ with their maximal variation
 $\max_{0\le i\le N-1} \max_{0\le t\le 1}|\sigma_i(t)-\sigma_i(1/2)|=1.6750$.
 }\label{euler_freq.fig}
\end{figure}

Observe from \eqref{Euleriangraph.eq2} that
${\mathbf L}_t, 0\le t\le 1$, satisfy
\begin{equation} \label{Euleriangraph.eq8}
{\mathbf L}_t- {\mathbf L}_s= (t-s) ({\mathbf L}^T-{\mathbf L}), \ 0\le t, s\le 1.
\end{equation}
This together with the Courant-Fischer-Weyl
min-max principle,
\begin{eqnarray*} \sigma_i(t) & \hskip-0.08in = & \hskip-0.08in \min_{\dim W=i+1} \max_{{\bf x}\in W, \|{\bf x}\|_2=1} \|{\bf L}_t{\bf x}\|_2\\
& \hskip-0.08in = & \hskip-0.08in
\max_{\dim W=N-i} \min_{{\bf x}\in W, \|{\bf x}\|_2=1} \|{\bf L}_t{\bf x}\|_2
\end{eqnarray*}
implies that
$\sigma_i(t), 0\le i\le N-1$, are Lipschitz functions,
\begin{equation}\label{sigmalipschitz.eq}
|\sigma_i(t)-\sigma_i(s)|\le \sigma_{\rm asym} |t-s|,\ 0\le t, s\le 1.
\end{equation}
 In the
following theorem, we consider the  differentiability of frequencies
and left/right frequency components  with respect to $0\le t\le 1$, when  $\sigma_i(t), 0\le i\le N-1$, are  simple.
By \eqref{sigmalipschitz.eq}, we see that the simple requirement \eqref{Euleriangraph.thm.eq0}  is met if
all eigenvalues of Laplacian on ${\mathcal E}_{1/2}$ are simple,
and
the directed  Eulerian graph ${\mathcal E}$ is close to its undirected counterpart  ${\mathcal E}_{1/2}$ in the sense that
$$ 0<\sigma_{\rm asym}\le \alpha  \min_{1\le i\le N-1} \sigma_i(1/2)- \sigma_{i-1}(1/2)$$
 for some $0<\alpha<1$.

\begin{theorem}\label{Euleriangraph.thm} {\rm
Let 
${\mathcal E}_t, 0\le t\le 1$,
be the family of  directed  Eulerian graphs   to connect a directed Eulerian graph ${\mathcal E}$ and its transpose graph ${\mathcal E}^T$, and
 the associated  Laplacian $\L_t$  in \eqref{Euleriangraph.eq2}
has the SVD \eqref{Euleriangraph.eq3} with
 orthogonal matrices
${\bf U}_t=[{\bf u}_0(t), \ldots, {\bf u}_{N-1}(t)]$ and ${\bf V}_t=[{\bf v}_0(t), \ldots, {\bf v}_{N-1}(t)], \ 0\le t\le 1$
being continuous about $0\le t\le 1$ and satisfying
\begin{equation} \label{Euleriangraph.thm.eq5}
{\bf u}_0(t)={\bf v}_0(t)=N^{-1/2} {\bf 1}.\end{equation}
Then for any $1\le i\le N-1$,  the $i$-th frequency and frequency components  of the graph Fourier transform
${\mathcal F}_t, 0\le t\le 1$ is differentiable about $t$ if
it is a simple singular value, i.e., $\sigma_i(t)\ne \sigma_j(t)$ for all $j\ne i$. Moreover,
for all $1\le i\le N-1$,
\begin{equation}\label{Euleriangraph.thm.eq1}
\frac{d \sigma_i(t)}{dt}= ({\bf v}_i(t))^T ({\bf L}^T-{\bf L}){\bf u}_i(t),
\end{equation}
\begin{eqnarray} \label{Euleriangraph.thm.eq2}
\frac{d {\bf u}_i(t)}{dt}=\sum_{k=1}^{N-1}
\Big( - b_{i,k}(t) ({\bf v}_k(t))^T ({\bf L}^T-{\bf L}){\bf u}_i(t)  
+a_{i,k}(t)
({\bf u}_k(t))^T ({\bf L}^T-{\bf L}){\bf v}_i(t)\Big) {\bf u}_k\quad
\end{eqnarray}
and
\begin{eqnarray} \label{Euleriangraph.thm.eq3}
\frac{d {\bf v}_i(t)}{dt}=\sum_{k=1}^{N-1}
\Big( - a_{i,k}(t) ({\bf v}_k(t))^T ({\bf L}^T-{\bf L}){\bf u}_i(t) 
+b_{i,k}(t)
({\bf u}_k(t))^T ({\bf L}^T-{\bf L}){\bf v}_i(t)\Big) {\bf v}_k,\quad
\end{eqnarray}
where
$$a_{i, k}(t)= \left\{\begin{array}{ll} \sigma_i(t) ((\sigma_i(t))^2- (\sigma_{k}(t))^2)^{-1}  &  {\rm if} \ k\ne i\\
 (4 \sigma_i(t))^{-1} & {\rm if} \ k=i,
 \end{array}
 \right.$$
and
$$b_{i, k}(t)= \left\{\begin{array}{ll} \sigma_{k}(t) ((\sigma_i(t))^2- (\sigma_{k}(t))^2)^{-1}  &  {\rm if} \ k\ne i\\
 (-4 \sigma_i(t))^{-1} & {\rm if} \ k=i.
 \end{array}
 \right. $$
}
\end{theorem}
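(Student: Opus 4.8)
The plan is to transfer the statement to the symmetric perturbation theory of the self-adjoint dilation $\mathcal S(\L_t)$ in \eqref{mathL.def}, for which the eigenvalue, the eigenprojection and the eigenvector attached to a \emph{simple} eigenvalue depend analytically on the parameter. By \eqref{mathL.decomp}, the symmetric matrix $\mathcal S(\L_t)$ has the $2N$ eigenpairs
$$\Big(\pm\sigma_i(t),\ \mathbf w_i^{\pm}(t)\Big),\qquad \mathbf w_i^{\pm}(t)=\frac{1}{\sqrt2}\begin{pmatrix}\mathbf u_i(t)\\ \pm\mathbf v_i(t)\end{pmatrix},\quad 0\le i\le N-1,$$
and, since $\L_t=(1-t)\L+t\L^T$ is affine in $t$, its derivative is the constant symmetric matrix
$$\mathbf M:=\frac{d}{dt}\,\mathcal S(\L_t)=\begin{pmatrix}\mathbf O & \L^T-\L\\ \L-\L^T & \mathbf O\end{pmatrix}.$$
For $1\le i\le N-1$ the hypothesis that $\sigma_i(t)$ is a simple singular value, together with $\sigma_i(t)>\sigma_0(t)=0$, forces $+\sigma_i(t)$ to be a \emph{simple} eigenvalue of $\mathcal S(\L_t)$: it cannot equal any $\sigma_j(t)$ with $j\ne i$, nor any nonpositive $-\sigma_j(t)$. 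First I would invoke analytic perturbation theory for a simple eigenvalue (the rank-one Riesz eigenprojection $\mathbf P_i(t)$ is analytic in $t$): then $\sigma_i(t)=\operatorname{tr}(\mathcal S(\L_t)\mathbf P_i(t))$ is differentiable, and once the sign is pinned down by the given continuous choice of $\U_t,\V_t$ the eigenvector $\mathbf w_i^+(t)$ is differentiable, hence so are $\mathbf u_i(t)$ and $\mathbf v_i(t)$, which are $\sqrt2$ times its top and bottom $N$-blocks.

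For the frequency derivative I would apply the Hellmann--Feynman identity $\sigma_i'(t)=\mathbf w_i^+(t)^T\mathbf M\,\mathbf w_i^+(t)$, obtained by differentiating $\mathcal S(\L_t)\mathbf w_i^+=\sigma_i\mathbf w_i^+$ and pairing with $\mathbf w_i^+$; the $\dot{\mathbf w}_i^+$ terms cancel because $\mathcal S(\L_t)$ is symmetric and $\|\mathbf w_i^+\|_2=1$. Expanding the $2\times2$ block product then yields \eqref{Euleriangraph.thm.eq1}.

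For the frequency-component derivatives I would use the standard first-order eigenvector expansion for a simple eigenvalue,
$$\frac{d\mathbf w_i^+}{dt}=\sum_{\mathbf p}\frac{\mathbf p^T\mathbf M\,\mathbf w_i^+}{\sigma_i(t)-\mu_{\mathbf p}}\,\mathbf p,$$
the sum running over the remaining orthonormal eigenvectors $\mathbf p\in\{\mathbf w_k^{+},\mathbf w_k^{-}\}$ with eigenvalues $\mu_{\mathbf p}\in\{\sigma_k,-\sigma_k\}$; there is no $\mathbf w_i^+$ component because $\|\mathbf w_i^+\|_2=1$. Reading off the top and bottom blocks recovers $d\mathbf u_i/dt$ and $d\mathbf v_i/dt$ once, for each fixed $k$, the $\mathbf w_k^{+}$ contribution (denominator $\sigma_i-\sigma_k$) and the $\mathbf w_k^{-}$ contribution (denominator $\sigma_i+\sigma_k$) are combined. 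Simplifying $\frac{1}{2}\big((\sigma_i-\sigma_k)^{-1}\pm(\sigma_i+\sigma_k)^{-1}\big)$ over the common denominator $\sigma_i^2-\sigma_k^2$ is precisely what produces the coefficients $a_{i,k}(t)$ and $b_{i,k}(t)$, while the self-coupling to the partner vector $\mathbf w_i^{-}$ (eigenvalue $-\sigma_i$, denominator $2\sigma_i$) produces the $k=i$ values $(4\sigma_i)^{-1}$ and $-(4\sigma_i)^{-1}$.

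The main work, and the place to be careful, is this last block bookkeeping, where two points need attention. First, the sum should a priori include the degenerate kernel vectors $\mathbf w_0^{\pm}$; these drop out because $\mathbf u_0=\mathbf v_0=N^{-1/2}\mathbf 1$ by \eqref{Euleriangraph.thm.eq5} and $(\L^T-\L)\mathbf 1=\mathbf 0$ on an Eulerian graph, so every numerator $\mathbf w_0^{\pm T}\mathbf M\,\mathbf w_i^+$ vanishes and the sums may legitimately be started at $k=1$. Second, unlike ordinary symmetric eigenvector perturbation, the same singular value $\sigma_i$ carries two eigenvectors $\mathbf w_i^{\pm}$ of $\mathcal S(\L_t)$, so the $\mathbf w_i^{-}$ term must be retained, and tracking its sign through the negation of the lower block is exactly what separates the $d\mathbf u_i/dt$ and $d\mathbf v_i/dt$ formulas and fixes the relative signs of the $a_{i,k},b_{i,k}$ terms. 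Everything else reduces to the routine algebraic simplification described above.
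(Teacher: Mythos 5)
Your proposal takes a genuinely different route from the paper's proof, and at the level of method it is sound. The paper never invokes classical perturbation theory: it encodes the eigenpair equation and the normalization in the map $H({\bf z},\lambda,t)$ of \eqref{H.def}, applies the implicit function theorem to the bordered Jacobian \eqref{Euleriangraph.thm.pfeq4}, and reads the derivatives off an explicit inverse built from the pseudo-inverse $(\sigma_i(t){\bf I}-{\pmb \Lambda}_t)^\dag$. You replace that machinery by analyticity of the Riesz projection of the simple eigenvalue $+\sigma_i(t)$ of ${\mathcal S}({\bf L}_t)$, the Hellmann--Feynman identity, and the standard first-order eigenvector expansion. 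Both arguments live on the same self-adjoint dilation; your reduction of simplicity of the singular value to simplicity of the dilation eigenvalue (using $\sigma_i(t)>\sigma_0(t)=0$) is correct and is implicit in the paper's \eqref{Euleriangraph.thm.pfeq3+}; and your two ``points of attention'' --- the partner vector with denominator $2\sigma_i$ and the kernel pair killed by \eqref{Euleriangraph.thm.eq5} together with ${\bf 1}^T({\bf L}^T-{\bf L})={\bf 0}$ --- correspond exactly to the paper's handling of the lower block of the pseudo-inverse and its final step. Carried out, your block bookkeeping reproduces \eqref{Euleriangraph.thm.eq2} and \eqref{Euleriangraph.thm.eq3} verbatim, with less computation than the paper's bordered-matrix inversion.

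The one genuine problem is the sentence asserting that expanding the Hellmann--Feynman identity ``yields \eqref{Euleriangraph.thm.eq1}.'' It does not. With ${\bf M}=\frac{d}{dt}{\mathcal S}({\bf L}_t)$ and ${\bf w}_i^+=\frac{1}{\sqrt 2}\begin{pmatrix}{\bf u}_i\\ {\bf v}_i\end{pmatrix}$, the expansion gives
\[
\sigma_i'(t)=({\bf w}_i^+)^T{\bf M}\,{\bf w}_i^+
=\frac12\Big({\bf u}_i^T({\bf L}^T-{\bf L}){\bf v}_i+{\bf v}_i^T({\bf L}-{\bf L}^T){\bf u}_i\Big)
={\bf u}_i^T({\bf L}^T-{\bf L}){\bf v}_i,
\]
and since ${\bf L}^T-{\bf L}$ is antisymmetric this equals $-({\bf v}_i)^T({\bf L}^T-{\bf L}){\bf u}_i$, the \emph{negative} of the right-hand side of \eqref{Euleriangraph.thm.eq1}; the same conclusion follows without the dilation by differentiating $\sigma_i(t)={\bf u}_i(t)^T{\bf L}_t{\bf v}_i(t)$ and using $\dot{\bf u}_i^T{\bf u}_i={\bf v}_i^T\dot{\bf v}_i=0$. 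So either you made a silent sign slip in the expansion, or you asserted agreement without performing it. In fact, your (correct) method exposes a sign error in the paper itself: since $H$ contains $-{\mathcal S}({\bf L}_t){\bf z}$ and $\frac{d}{dt}{\bf L}_t={\bf L}^T-{\bf L}$, the gradient in \eqref{Euleriangraph.thm.pfeq7} should read $-{\mathcal S}({\bf L}^T-{\bf L}){\bf z}$, so the leading minus sign in \eqref{Euleriangraph.thm.pfeq8} should be absent; the paper then drops that minus a second time in passing to \eqref{Euleriangraph.thm.pfeq11}--\eqref{Euleriangraph.thm.pfeq12}, which is why \eqref{Euleriangraph.thm.eq2}--\eqref{Euleriangraph.thm.eq3} agree with your derivation while \eqref{Euleriangraph.thm.eq1} comes out with the opposite sign. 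A correct write-up of your argument should therefore conclude with $\sigma_i'(t)={\bf u}_i^T({\bf L}^T-{\bf L}){\bf v}_i$ and explicitly flag that \eqref{Euleriangraph.thm.eq1} as printed holds only up to sign, rather than claim to reproduce it.
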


The detailed proof of Theorem \ref{Euleriangraph.thm} will be given in  Appendix \ref{Euleriangraph.thm.pfappendix}.
By Theorem \ref{Euleriangraph.thm}, we have
\begin{equation}\label{perturbation.eq00}
\left|\frac{d \sigma_i(t)}{dt}\right|\le \sigma_{\rm asym}.
\end{equation}
Set
$$C(t)=\max_{1\le i, k\le N-1} |a_{i, k}(t)|+ \max_{1\le i, k\le N-1} |b_{i, k}(t)|.$$
The orthogonality of the matrices $\bf U$ and $\bf V$ implies that 
\begin{eqnarray*}
\left\|\frac{d {\bf u}_i(t)}{dt}\right\|_2\hskip-0.05in&\le\hskip-0.05in&\hskip-0.05in\max_{1\le i, k\le N-1} |a_{i, k}(t)|\ \|({\bf L}^T-{\bf L}){\bf u}_i(t)\|_2
+\max_{1\le i, k\le N-1} |b_{i, k}(t)|\ \| ({\bf L}^T-{\bf L}){\bf v}_i(t)\|_2 
\le C(t)\sigma_{\rm asym}.
\end{eqnarray*}
Following a similar argument to $\left\|\frac{d {\bf v}_i(t)}{dt}\right\|_2$,  we have
\begin{equation}\label{perturbation.eq01}
\max \left(\left\|\frac{d {\bf u}_i(t)}{dt}\right\|_2, \left\|\frac{d {\bf v}_i(t)}{dt}\right\|_2\right)\le  C(t)\sigma_{\rm asym}.
\end{equation}
 This concludes that frequencies and frequency components have small variation about $0\le t\le 1$ when
  the degree   $\sigma_{\rm asym}$ of asymmetry of the Eulerian graph ${\mathcal E}$
 is small.

Under the simplicity assumption \eqref{Euleriangraph.thm.eq0} for all singular values $\sigma_i(t), 0\le i\le N-1$,
in addition to the symmetry \eqref{Euleriangraph.eq6} for the graph frequencies, we have the certain symmetric property for the  orthogonal matrices
 ${\bf U}_t$ and  ${\bf V}_t, 0\le t\le 1$, see Appendix \ref{Euleriangraph.cor.pfappendix} for the detailed proof.

\begin{theorem} \label{Euleriangraph.cor}
{\rm Let 
the family of  directed  Eulerian graph
${\mathcal E}_t, 0\le t\le 1$,
 the associated  Laplacian $\L_t$,
  the singular value decomposition
  ${\bf L}_t={\bf U}_t{\pmb \Sigma}_t{\bf V}_t$
be as in Theorem \ref{Euleriangraph.thm}.
If the singular values
$\sigma_i(t), 0\le i\le N-1$, satisfy
\eqref{Euleriangraph.thm.eq0}, then  for all $0\le t\le 1$,
\begin{equation} \label{Euleriangraph.cor.eq}
{\bf U}_t={\bf V}_{1-t},
\end{equation}
and
\begin{equation}\label{Euleriangraph.cor.eq+}
{\mathcal  F}_{1-t} {\bf x}= \begin{pmatrix} {\bf I} & {\bf 0}\\ {\bf 0} & -{\bf I}\end{pmatrix}
{\mathcal F}_t {\bf x}
\end{equation}
where ${\bf x}$ is a graph signal on the Eulerian graph ${\mathcal E}$.}
\end{theorem}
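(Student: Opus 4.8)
The plan is to reduce the statement to a single algebraic identity relating the Laplacians at the parameters $t$ and $1-t$, and then to transport it to the singular vectors by uniqueness of the SVD under the simplicity hypothesis. First I would record the key identity ${\bf L}_{1-t}={\bf L}_t^T$: by \eqref{Euleriangraph.eq2} one has ${\bf L}_t^T=(1-t){\bf L}^T+t{\bf L}$ and ${\bf L}_{1-t}=t{\bf L}+(1-t){\bf L}^T$, so the two matrices coincide. Transposing the SVD \eqref{Euleriangraph.eq3} of ${\bf L}_t$ then gives ${\bf L}_{1-t}={\bf L}_t^T={\bf V}_t{\pmb\Sigma}_t{\bf U}_t^T$, which is an SVD of ${\bf L}_{1-t}$ with left factor ${\bf V}_t$, right factor ${\bf U}_t$, and, by \eqref{Euleriangraph.eq6}, the same nondecreasing singular values ${\pmb\Sigma}_t={\pmb\Sigma}_{1-t}$ as the continuous-family SVD ${\bf L}_{1-t}={\bf U}_{1-t}{\pmb\Sigma}_{1-t}{\bf V}_{1-t}^T$.

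Next I would compare these two SVDs of the same matrix ${\bf L}_{1-t}$. Under the simplicity assumption \eqref{Euleriangraph.thm.eq0} every positive singular value is simple, so for $1\le i\le N-1$ the $i$-th left and right singular vectors are determined up to a single common sign flip of the pair. Comparing the two decompositions therefore produces signs $s_i(t)\in\{-1,1\}$ with ${\bf u}_i(1-t)=s_i(t)\,{\bf v}_i(t)$ and ${\bf v}_i(1-t)=s_i(t)\,{\bf u}_i(t)$, the same $s_i(t)$ in both because the left/right signs are coupled through ${\bf u}_i=\sigma_i^{-1}{\bf L}{\bf v}_i$. Since ${\bf u}_i(t)$ and ${\bf v}_i(1-t)$ are unit vectors, $s_i(t)=\langle {\bf u}_i(t),{\bf v}_i(1-t)\rangle$ is a continuous $\{-1,1\}$-valued function of $t$ on the connected interval $[0,1]$, hence constant.

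To pin the sign down I would evaluate at $t=1/2$. There ${\bf L}_{1/2}=\tfrac12({\bf L}+{\bf L}^T)$ is symmetric, and because $\mathcal{E}$ is Eulerian it equals ${\bf D}-\tfrac12({\bf A}+{\bf A}^T)$, the Laplacian of the undirected symmetrization of $\mathcal{E}$ (the in-degree equals the out-degree at each vertex, so the symmetrized adjacency has the correct row sums), hence positive semidefinite. Consequently its singular vectors are eigenvectors with nonnegative eigenvalues, and for $i\ge1$ one gets ${\bf u}_i(1/2)=\sigma_i(1/2)^{-1}{\bf L}_{1/2}{\bf v}_i(1/2)={\bf v}_i(1/2)$, so $s_i\equiv1$. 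Combined with ${\bf u}_0(t)={\bf v}_0(t)=N^{-1/2}{\bf 1}$ from \eqref{Euleriangraph.thm.eq5}, this yields ${\bf u}_i(t)={\bf v}_i(1-t)$ for all $0\le i\le N-1$, i.e.\ ${\bf U}_t={\bf V}_{1-t}$, which is \eqref{Euleriangraph.cor.eq}. Finally, substituting ${\bf U}_{1-t}={\bf V}_t$ and ${\bf V}_{1-t}={\bf U}_t$ into the definition \eqref{Euleriangraph.eq4+} of $\mathcal{F}_{1-t}$ leaves the top block $\tfrac12({\bf U}_t^T{\bf x}+{\bf V}_t^T{\bf x})$ unchanged and negates the bottom block $\tfrac12({\bf U}_t^T{\bf x}-{\bf V}_t^T{\bf x})$, which is exactly \eqref{Euleriangraph.cor.eq+}.

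I expect the sign determination to be the only genuine obstacle: one must invoke the coupled-sign uniqueness of an SVD with simple positive singular values, promote the per-$t$ sign to a global constant by continuity, and identify that ${\bf L}_{1/2}$ is positive semidefinite so that the base value is $s_i=+1$ rather than $-1$. Everything preceding and following this point is formal transposition and substitution.
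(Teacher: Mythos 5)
Your proof is correct and follows essentially the same route as the paper's: exploit ${\bf L}_{1-t}={\bf L}_t^T$ to get two SVDs of the same matrix, use simplicity to reduce the comparison to a diagonal sign matrix, promote the signs to constants by continuity, and fix them to $+1$ at $t=1/2$. If anything, you are slightly more careful than the paper at the base point, since you justify ${\bf U}_{1/2}={\bf V}_{1/2}$ via positive semidefiniteness of ${\bf L}_{1/2}$ (symmetry and simplicity alone would not rule out a sign flip on a negative eigenvalue), whereas the paper leaves that implicit.
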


For the case that ${\mathcal E}$ is an  undirected graph (hence an Eulerian graph),
the orthogonal matrices ${\bf U}_t$ and ${\bf V}_t, 0\le t\le 1$, in the singular value decomposition
\eqref{Euleriangraph.eq3}  can be chosen to be independent on $t$.
The converse is true as well, because  ${\bf U}_t={\bf U}_{1/2}={\bf V}_{1/2}={\bf V}_t, 0\le t\le 1$,
by the independence of orthogonal matrices ${\bf U}_t$ and ${\bf V}_t$ on $0\le t\le 1$ and the observation that the Eulerian graph ${\mathcal G}_{1/2}$  is undirected,
we have that ${\bf L}^T={\bf L}$ and ${\mathcal E}$ is undirected.
In the following theorem, we show that
$({\bf L}^T)^2={\mathbf L}^2$ is a necessary condition for any pair of orthogonal matrices
${\bf U}_t$ and ${\bf V}_t, 0\le t\le 1$, are identical, see Appendix \ref{missundirected.thm.pfsection} for the proof.

\begin{theorem}\label{missundirected.thm}
{\rm Let ${\bf U}_t, {\bf V}_t$  be the orthogonal matrices  in the singular value decomposition
\eqref{Euleriangraph.eq3} of  the Laplacian ${\mathbf L}_t$  in \eqref{Euleriangraph.eq2}. If there exists $t_0\ne t_1\in [0, 1]$ such that
\begin{equation}\label{missundirected.thm.eq1}
{\bf U}_{t_0}={\bf U}_{t_1}\ {\rm and} \ {\bf V}_{t_0} ={\bf V}_{t_1},
\end{equation}
then  $({\bf L}^T)^2={\mathbf L}^2$.}
\end{theorem}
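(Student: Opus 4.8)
The plan is to work directly with the SVD identity $\L_t=\U_t{\pmb\Sigma}_t\V_t^T$ and the affine dependence $\L_t=(1-t)\L+t\L^T$, rather than with the Gram matrices $\L_t^T\L_t$ and $\L_t\L_t^T$. Write $\U:=\U_{t_0}=\U_{t_1}$ and $\V:=\V_{t_0}=\V_{t_1}$, and set $P:=\U^T\L\V$ and $R:=\U^T\L^T\V$. Sandwiching $\L_t$ by $\U^T(\cdot)\V$ gives $\U^T\L_t\V=(1-t)P+tR$. On the other hand, since $\U$ and $\V$ are orthogonal and equal to $\U_{t_j},\V_{t_j}$ at $t=t_0,t_1$, the hypothesis forces $\U^T\L_{t_j}\V={\pmb\Sigma}_{t_j}$ to be a \emph{diagonal} matrix for $j=0,1$.

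First I would deduce that $P$ and $R$ are themselves diagonal. The two identities $(1-t_0)P+t_0R={\pmb\Sigma}_{t_0}$ and $(1-t_1)P+t_1R={\pmb\Sigma}_{t_1}$ constitute a linear system in the pair $(P,R)$ whose coefficient determinant is $(1-t_0)t_1-t_0(1-t_1)=t_1-t_0\neq0$. Solving this $2\times2$ system expresses $P$ and $R$ as scalar linear combinations of the diagonal matrices ${\pmb\Sigma}_{t_0}$ and ${\pmb\Sigma}_{t_1}$; explicitly $P=(t_1-t_0)^{-1}(t_1{\pmb\Sigma}_{t_0}-t_0{\pmb\Sigma}_{t_1})$ and $R=P-(t_1-t_0)^{-1}({\pmb\Sigma}_{t_0}-{\pmb\Sigma}_{t_1})$, so both $P$ and $R$ are diagonal.

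The crux is then a double-factorization argument. From $P=\U^T\L\V$ I read off $\L=\U P\V^T$, and from $R=\U^T\L^T\V$ I read off, after transposing, $\L=\V R\U^T$. Multiplying one copy of each representation and using $\V^T\V=\bf I$ gives
\begin{equation*}
\L^2=(\U P\V^T)(\V R\U^T)=\U P R\,\U^T .
\end{equation*}
Transposing the two representations yields $\L^T=\V P\U^T$ and $\L^T=\U R\V^T$, whence
\begin{equation*}
(\L^T)^2=(\U R\V^T)(\V P\U^T)=\U R P\,\U^T .
\end{equation*}
Since $P$ and $R$ are diagonal they commute, $PR=RP$, and therefore $\L^2=\U PR\,\U^T=\U RP\,\U^T=(\L^T)^2$, which is the assertion.

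I do not expect a genuine obstacle here; the one point requiring care is to exploit the \emph{SVD} relation $\U^T\L_t\V={\pmb\Sigma}_{t_j}$ (diagonal and affine in $t$) rather than the weaker Gram relations $\L_{t_j}^T\L_{t_j}=\V{\pmb\Sigma}_{t_j}^2\V^T$. The Gram matrices alone give only two data points for a quantity that is \emph{quadratic} in $t$, so they fail to pin down the off-diagonal structure of the coefficient matrices; it is precisely the \emph{linear} dependence of $\U^T\L_t\V$ on $t$ that lets two interpolation nodes force $P$ and $R$ to be diagonal, after which the mixed-factorization computation closes the argument.
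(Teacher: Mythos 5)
Your proof is correct, and it is organized differently from the paper's. The paper never isolates $P=\U^T\L\V$ and $R=\U^T\L^T\V$; instead it forms the single product ${\bf L}_{t_0}({\bf L}_{t_1})^T=\U{\pmb \Sigma}_{t_0}\V^T\V{\pmb \Sigma}_{t_1}\U^T=\U{\pmb \Sigma}_{t_0}{\pmb \Sigma}_{t_1}\U^T$, notes that the diagonal factors commute so that ${\bf L}_{t_0}({\bf L}_{t_1})^T={\bf L}_{t_1}({\bf L}_{t_0})^T$, and then expands both sides bilinearly in $\L$ and $\L^T$: the ${\bf L}{\bf L}^T$ and ${\bf L}^T{\bf L}$ terms cancel identically, and matching the rest gives $(t_1-t_0)\L^2=(t_1-t_0)(\L^T)^2$, whence the claim since $t_0\ne t_1$. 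Both arguments rest on the same three ingredients---shared singular vectors, commutativity of diagonal matrices, and the invertibility of the two-point interpolation---but you deploy them in the opposite order: you spend the hypothesis $t_0\ne t_1$ first, to solve the $2\times 2$ system and conclude that $P$ and $R$ are diagonal, and only then use commutativity; the paper uses commutativity first and spends $t_0\ne t_1$ at the very end to cancel. What your version buys is a stronger intermediate structural fact: the pair $(\U,\V)$ simultaneously ``diagonalizes'' $\L$ and $\L^T$, i.e., $\L=\U P\V^T$ is itself a signed SVD of $\L$, from which $\L^2=(\L^T)^2$ (and further identities such as $\L\L^T=\U P^2\U^T$) drop out by inspection. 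What the paper's version buys is brevity: a one-display computation with no need to justify diagonality of any auxiliary matrices. Your closing remark is also well taken---the Gram matrices $\L_t^T\L_t$ alone, being quadratic in $t$, would not suffice with only two nodes; it is the affine dependence of $\U^T\L_t\V$ on $t$ (equivalently, of $\L_t$ itself, which is what the paper's bilinear expansion exploits) that makes two values of $t$ enough.
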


\section{Numerical simulations}
\label{numerocal.section}

Graph Fourier transform should be designed   to decompose  graph signals  into different frequency components,
 to represent them by different modes of variation efficiently,
 and to  have energy of smooth graph signals concentrated mainly at low frequencies.
In this section, we demonstrate the performance of the proposed SVD-based GFT to denoise
 the hourly temperature data set
collected at 218 locations
in the United States on August 1st, 2010
via
the bandlimiting  ${\bf P}_M$ at the first $M$-frequencies \cite{ncjs22,  zeng2017, cheng2020}.
Here for the SVD-based GFT proposed in this paper, the  bandlimiting ${\bf P}_M$ of a graph signal ${\bf x}$ at the first $M$-frequencies is given by
$${\bf P}_M {\bf x}=\! {\mathcal F}^{-1} \begin{pmatrix}
{\bf \chi}_M & \! {\bf O}\\
{\bf O}&\! {\bf \chi}_M
\end{pmatrix} {\mathcal F}{\bf x}\! =\frac{1}{2}\!
\sum_{i=0}^{M-1} \langle {\bf x}, {\bf u}_i\rangle {\bf u}_i+ \langle {\bf x}, {\bf v}_i\rangle {\bf v}_i,
$$
where $\chi_M$ is a diagonal  matrix with the first $M$ diagonal entries taking value one and all others taking value zero,
and for $0\le i\le M-1$, the $i$-th left/right frequency components ${\bf u}_i, {\bf v}_i$ are $i$-th columns of orthogonal matrices ${\bf U}$ and ${\bf V}$ in the SVD \eqref{svd.def} respectively.
For the GFT defined by splitting orthogonality constraints (SOC) and proximal alternating minimized augmented Lagrangian (PAMAL),
the  bandlimiting  ${\bf P}_M$ of a graph signal ${\bf x}$ at the first $M$-frequencies is given by
$${\bf P}_M{\bf x}=\sum_{i=0}^{M-1}\langle {\bf x},  {\bf v}_i\rangle {\bf v}_i $$
where ${\bf v}_i, 0\le i\le M-1$, is the $i$-th frequency component in  \cite{Sardellitti17}.

Let the underlying graph $\cal G$ of the US weather data set have $218$ vertices representing  locations of weather stations,
edges given by 5-nearest neighboring stations in physical distances,
and    weights on the edges are randomly chosen in $[0.8, 1.2]$, and denote the US temperature measured in Fahrenheit
 on August 1st, 2010  by ${\bf x}(t_k), 1\le k\le 24$, see  \cite[Fig. 6]{cheng2020} for two snapshots of the data set.
Shown in Figure  \ref{denoise_us_temp.fig} are the denoising performances in ISNR and SNR to apply
the bandlimiting projection ${\bf P}_M$ to  the noisy observations
\begin{equation}
\label{noiseweather.eq}
{\bf y}(t_k)={\bf x}(t_k)+{\pmb \eta}(t_k), 1\le k\le 24,\end{equation}
  corrupted with additive  random noises ${\pmb \eta}(t_k)$ with entries being i.i.d. and having mean zero and variance
$c\in [4, 16]$.
Here
 the input signal-to-noise ratio
(ISNR) and
the output signal-to-noise ratio (SNR) are defined by
	$$
{\rm ISNR}= -20 \log_{10} \frac{\|{\pmb \eta}\|_2}{\|{\bf x}\|_2}\ {\rm and} \
{\rm SNR}=-20 \log_{10} \frac{\|\widehat {\bf x}-{\bf x}\|_2}{\|{\bf x}\|_2}, $$
 where  the original signal ${\bf x}$, the noisy measurement ${\bf y}$
 and the denoised signal $\widehat {\bf x}$ 
 are given by
 the hourly weather data ${\bf x}(t_k)$, the noisy weather data ${\bf y}(t_k)$ in \eqref{noiseweather.eq},
and the denoised signal ${\bf P}_M{\bf y}(t_k)$ by  bandlimiting ${\bf P}_M$
 the noisy weather data ${\bf y}(t_k)$  to the first $M$-frequencies
  respectively.
 We observe from Figure  \ref{denoise_us_temp.fig}
  that the SVD-based GFT proposed in this paper outperforms
 the SOC and PAMAL-based GFTs in \cite{Sardellitti17}
 on denoising the US hourly weather data set
 on August 1st, 2010 by bandlimiting  ${\bf P}_M$ at the first $M$-frequencies.
 It is also noticed that
 the SOC and PAMAL-based GFTs in \cite{Sardellitti17} have very similar performance on denoising the weather data set. The possible reason is that
 they  are based on  different relaxations of the same  constrained
minimization problem \eqref{comparison.remark.eq1}.

\begin{figure}
\centering
\includegraphics[width=52mm, height=48mm]{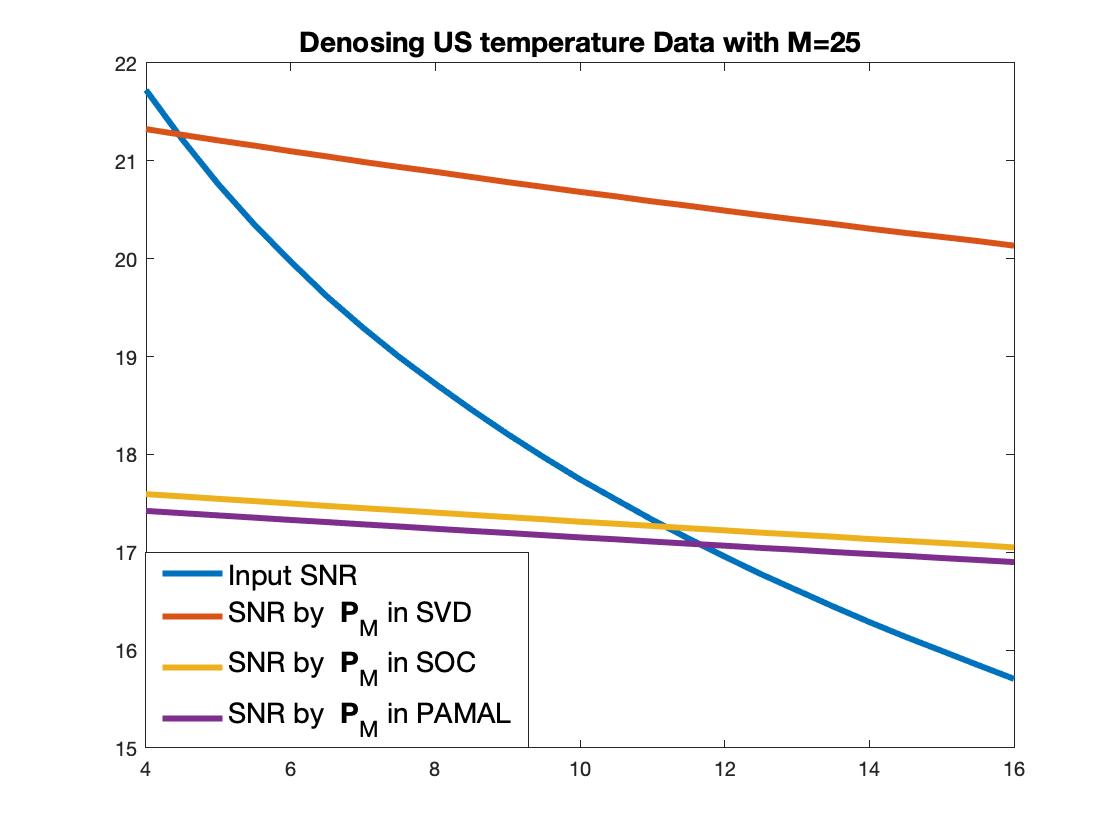}
\includegraphics[width=52mm, height=48mm]{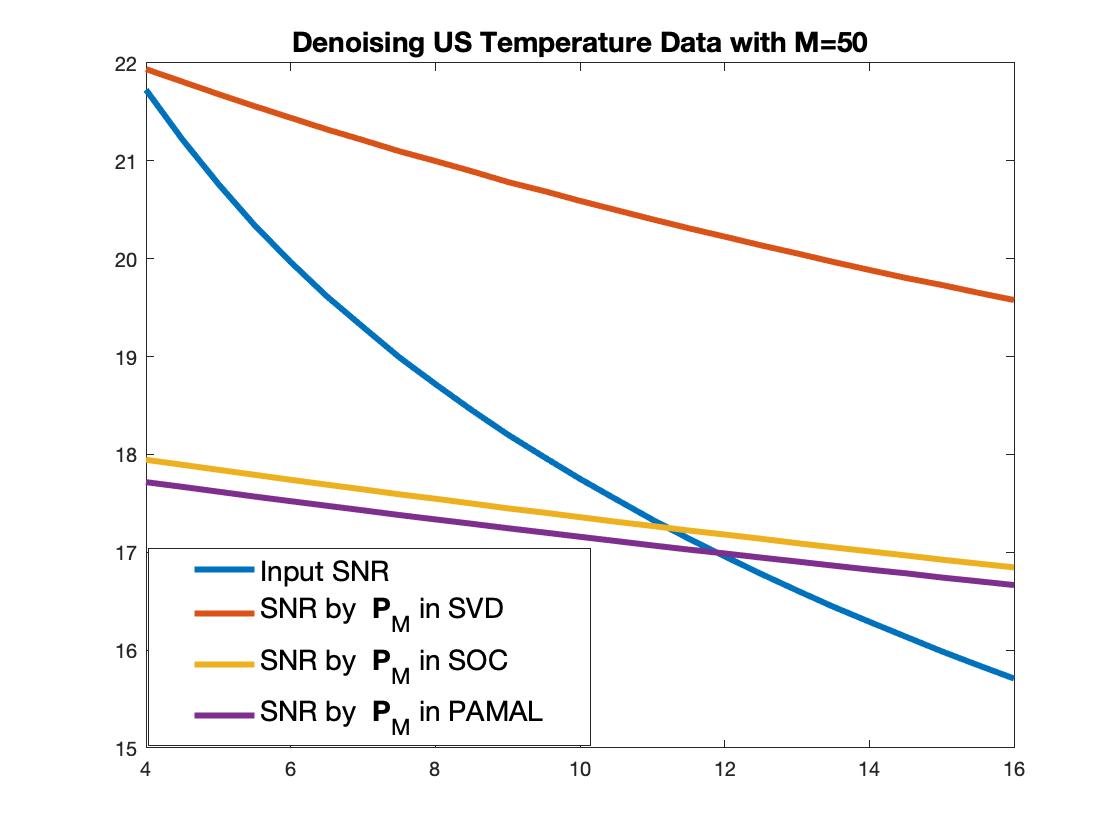}
\caption{
Plotted
are the averages of  ISNR and SNR of denoising the hourly  temperature data ${\bf y}(t_k), 1\le k\le 24$,  on August 1st, 2012 over $1000\times 24$ trials,  via
bandlimiting  ${\bf P}_M$ at the first $M$-frequencies
of the SVD, SOC and PAMAL-based GFTs,
where $M=25$  (left) and  $M=50$ (right).
  }\label{denoise_us_temp.fig}
\end{figure}

    \appendix

In the appendix, we collect the proofs of Theorems \ref{circulant.thm}, \ref{Euleriangraph.cor} and \ref{missundirected.thm}.

\subsection{Graph Fourier transform on circulant graphs}\label{TheoremCirculant.proof}

In this appendix, we consider GFT on circulant graphs and provide a  proof of  Theorem \ref{circulant.thm}.

Write $Q=\{q_1, \ldots, q_L\}$ with $1\le q_1<q_2<\ldots< q_L\le N-1$,
and ${\bf W}=[{\bf w}_0, \ldots {\bf w}_{N-1}]$.
Observe that the Laplacian matrix
${\mathbf L}_{{\mathcal C}_d}=(c_{ij})_{0\le i, j\le N-1}$
on the circulant graph ${\mathcal C}_d:={\mathcal C}_d(N, Q)$
is a circulant matrix with $ij$-th entries
$c_{ij}, 0\le i, j\le N-1$,
 given by
$$c_{ij}=\left\{ \begin{array}{ll} L  & {\rm if} \ j=i\\
-1 & {\rm if} \ j-i \in Q\   {\rm mod}\ N\\
0 & {\rm otherwise}.
\end{array} \right.
$$
Then  one may verify that
\begin{equation} \label{circulant.thm.pfeq1} %
{\mathbf L}_{{\mathcal C}_d} {\bf w}_i= P(\omega_N^{i}) {\bf w}_i,\ 0\le i\le N-1,
\end{equation}
where $P$ is the polynomial symbol of the circulant matrix ${\mathbf L}_{{\mathcal C}_d}$  defined by  \eqref{circulant.symbol}.

Let
\begin{equation} \label{circulant.thm.pfeq2}
{\bf M}=
{\rm diag} (|P(1)|, |P(\omega_N)|,  \ldots,  |P(\omega_N^{N-1})|) \end{equation}
be the diagonal matrix
 with magnitudes $|P(\omega_N^{i})|, 0\le i\le N-1$,
of the symbol $P$ on all $N$-th unit roots.
Then we can reformulate \eqref{circulant.thm.pfeq1}
in the following matrix form,
\begin{equation} \label{circulant.thm.pfeq3}
{\mathbf L}_{{\mathcal C}_d} {\bf W}= {\bf W} {\pmb\Theta}  {\bf M},
\end{equation}
where ${\pmb\Theta}$ is the diagonal matrix in \eqref{circulant.thm.eq2}.

 Let ${\bf Q}$ be a permutation matrix to rearrange
$|P(\omega_N^{i})|, 0\le i\le N-1$, in nondecreasing order, with $0$ as the first index,  and indices $i$ and $N-i, 1\le i<N/2$, next each other.
This together with $|P(\omega_N^{i})|= |P(\omega_N^{N-i})|, 1\le i\le N-1$, implies that
the diagonal matrix ${\pmb \Sigma}$ in \eqref{circulant.thm.eq5}
satisfies
\begin{equation}\label{circulant.thm.pfeq4}
{\pmb \Sigma}=  {\bf Q} {\bf M} {\bf Q}.
\end{equation}

Let   ${\bf e}_k, 0\le k\le N-1$, be the unit vectors with zero entries except the $k$-th entry taking value $1$, and
define  the permutation matrices  ${\bf P}_0$ and ${\bf P}_1$ by
\begin{equation}\label{circulant.thm.pfeq5}
{\bf P}_0=\left\{\begin{array}{l}
\! \big[{\bf e}_0,{\bf e}_1,{\bf e}_{N-1}, \ldots,{\bf e}_{(N-1)/2},{\bf e}_{(N+1)/2}\big]\qquad  \hfill {\rm if} \ N\ {\rm is\  odd}\\
\! \big[{\bf e}_0,{\bf e}_1,{\bf e}_{N-1}, \ldots,{\bf e}_{N/2-1},{\bf e}_{N/2+1}, {\bf e}_{N/2} \big]\qquad
\hfill {\rm if} \ N\ {\rm is\  even}
\end{array}
\right.
\end{equation}
and
\begin{equation}
\label{circulant.thm.pfeq6}
{\bf P}_1={\bf P}_0{\bf Q}.
\end{equation}
Therefore the conclusion in Theorem
 \ref{circulant.thm} about the GFT on the circulant graph ${\mathcal C}_d(N, Q)$ reduces to the
 singular value decomposition of ${\mathbf L}_{{\mathcal C}_d}$ in the following proposition.

\begin{proposition}\label{singular.prop}
{\rm
Let  ${\mathbf L}_{{\mathcal C}_d}$
be  the Laplacian matrix
on the circulant graph ${\mathcal C}_d:={\mathcal C}_d(N, Q)$,
and
$ {\bf W}, {\bf R},  {\pmb\Theta}, {\pmb \Sigma},  {\bf P}_0$ and ${\bf P}_1$
be as in \eqref{wdft.def}, \eqref{circulant.thm.eq1}, \eqref{circulant.thm.eq2}, \eqref{circulant.thm.eq5},
\eqref{circulant.thm.pfeq5} and \eqref{circulant.thm.pfeq6} respectively.
Then the matrices ${\bf U}$ and ${\bf V}$ in \eqref{circulant.thm.eq4} are orthogonal matrices with real entries, and
the singular value decomposition \eqref{circulant.thm.eq3} holds for  the Laplacian matrix
${\mathbf L}_{{\mathcal C}_d}$.
}
\end{proposition}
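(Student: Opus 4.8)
The plan is to start from the eigendecomposition of the circulant Laplacian and convert the (complex) DFT diagonalization into a real singular value decomposition by exploiting the conjugate symmetry coming from the real coefficients of the symbol $P$. By \eqref{circulant.thm.pfeq3} we have ${\bf L}_{{\mathcal C}_d}={\bf W}{\pmb\Theta}{\bf M}{\bf W}^H$, where ${\pmb\Theta}{\bf M}={\rm diag}(P(\omega_N^i))$, ${\bf W}$ is unitary, and ${\pmb\Theta}$ is diagonal unitary; this already exhibits a complex factorization carrying the singular-value magnitudes on the diagonal of ${\bf M}$. The whole task is therefore to (i) replace ${\bf W}$ and ${\bf W}{\pmb\Theta}$ by real orthogonal matrices and (ii) reorder the diagonal of ${\bf M}$ into the nondecreasing ${\pmb\Sigma}$, tracking the permutations. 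I would record at the outset the three conjugate-symmetry facts that drive everything: since $P$ has real coefficients, $P(\omega_N^{N-i})=\overline{P(\omega_N^i)}$, so $|P(\omega_N^{N-i})|=|P(\omega_N^i)|$ and $\exp(\sqrt{-1}\theta_{N-i})=\overline{\exp(\sqrt{-1}\theta_i)}$; the DFT columns satisfy ${\bf w}_{N-i}=\overline{{\bf w}_i}$; and ${\bf w}_0=N^{-1/2}{\bf 1}$ (together with ${\bf w}_{N/2}$ when $N$ is even) is real.

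Next I would show that ${\bf V}={\bf W}{\bf P}_0{\bf R}{\bf P}_1$ and ${\bf U}={\bf W}{\pmb\Theta}{\bf P}_0{\bf R}{\bf P}_1$ are real orthogonal. Orthogonality is immediate: each factor ${\bf W},{\pmb\Theta},{\bf R},{\bf P}_0,{\bf P}_1$ is unitary, hence so is the product, and a real unitary matrix is orthogonal. For realness, the permutation ${\bf P}_0$ groups each conjugate pair $({\bf w}_i,{\bf w}_{N-i})$ into adjacent columns, after which the corresponding block acts as $[{\bf w}_i,\overline{{\bf w}_i}]{\bf R}_2=\sqrt2\,[{\rm Re}\,{\bf w}_i,\,{\rm Im}\,{\bf w}_i]$, which is real; the same computation applies to ${\bf W}{\pmb\Theta}$, because $\exp(\sqrt{-1}\theta_i){\bf w}_i$ and $\exp(\sqrt{-1}\theta_{N-i}){\bf w}_{N-i}=\overline{\exp(\sqrt{-1}\theta_i){\bf w}_i}$ again form a conjugate pair. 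The single (odd $N$) or two (even $N$) unpaired columns are already real and are matched by the scalar blocks of ${\bf R}$, while the final permutation ${\bf P}_1$ only reorders real columns and preserves realness.

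To verify the decomposition ${\bf L}_{{\mathcal C}_d}={\bf U}{\pmb\Sigma}{\bf V}^T$, I would argue blockwise on conjugate pairs, which is the cleanest route and avoids a global permutation manipulation. For a pair with $P(\omega_N^i)=m\exp(\sqrt{-1}\theta)$ and $P(\omega_N^{N-i})=m\exp(-\sqrt{-1}\theta)$, set ${\bf v}=\sqrt2\,{\rm Re}\,{\bf w}_i$, ${\bf v}'=\sqrt2\,{\rm Im}\,{\bf w}_i$ and ${\bf u}=\sqrt2\,{\rm Re}(\exp(\sqrt{-1}\theta){\bf w}_i)$, ${\bf u}'=\sqrt2\,{\rm Im}(\exp(\sqrt{-1}\theta){\bf w}_i)$. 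Using \eqref{circulant.thm.pfeq1} one computes directly that ${\bf L}_{{\mathcal C}_d}{\bf v}=m{\bf u}$ and ${\bf L}_{{\mathcal C}_d}{\bf v}'=m{\bf u}'$, so each conjugate pair contributes a $2\times2$ diagonal block with singular value $m=|P(\omega_N^i)|$, and the unpaired indices $0$ and $N/2$ contribute $1\times1$ blocks. At the matrix level this is the statement that, because $|P|$ is constant on each pair, the reordered magnitude matrix ${\bf P}_0^H{\bf M}{\bf P}_0$ is block-scalar and hence commutes with the block-diagonal rotation ${\bf R}$; pushing ${\bf M}$ through ${\bf R}$ and collecting the permutations via ${\pmb\Sigma}={\bf Q}{\bf M}{\bf Q}$ and ${\bf P}_1={\bf P}_0{\bf Q}$ converts ${\bf W}{\pmb\Theta}{\bf M}{\bf W}^H$ into ${\bf U}{\pmb\Sigma}{\bf V}^T$.

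The main obstacle will be the permutation bookkeeping in this last step: one must ensure that each right component ${\bf v}_k$ (built from a given ${\bf w}_i$) is paired with the left component ${\bf u}_k$ built from the same $\exp(\sqrt{-1}\theta_i){\bf w}_i$ and is assigned the matching singular value, while simultaneously sorting the pair magnitudes into nondecreasing order. Concretely, this forces ${\bf P}_1$ to permute whole conjugate-pair blocks, never splitting a pair and never crossing a right component of one pair with a left component of another, which is precisely the compatibility between ${\bf P}_0$, the block structure of ${\bf R}$, and the sorting permutation ${\bf Q}$ encoded in \eqref{circulant.thm.pfeq5}--\eqref{circulant.thm.pfeq6}. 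Once this compatibility is checked, the blockwise identities above together with the orthogonality and realness of ${\bf U}$ and ${\bf V}$ yield the singular value decomposition \eqref{circulant.thm.eq3}.
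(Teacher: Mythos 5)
Your proposal is correct and follows essentially the same route as the paper: the paper likewise starts from the diagonalization ${\bf L}_{{\mathcal C}_d}{\bf W}={\bf W}{\pmb\Theta}{\bf M}$ in \eqref{circulant.thm.pfeq3}, proves that $\widetilde{\bf U}={\bf W}{\pmb\Theta}{\bf P}_0{\bf R}$ and $\widetilde{\bf V}={\bf W}{\bf P}_0{\bf R}$ are real orthogonal by exactly your conjugate-pair computation with the ${\bf R}_2$ blocks (its \eqref{circulant.thm.pfeq12}--\eqref{circulant.thm.pfeq15}, organized as an odd/even case split), and then deduces ${\bf L}_{{\mathcal C}_d}{\bf V}={\bf U}{\pmb\Sigma}$ from the commutation of the pair-grouped magnitude matrix with ${\bf R}$ (its \eqref{circulant.thm.pfeq16}), which is precisely the matrix form of your blockwise identities ${\bf L}_{{\mathcal C}_d}{\bf v}=m{\bf u}$ and ${\bf L}_{{\mathcal C}_d}{\bf v}'=m{\bf u}'$. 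The permutation bookkeeping you single out as the main obstacle is actually harmless: since one and the same ${\bf P}_1$ multiplies $\widetilde{\bf U}$ and $\widetilde{\bf V}$ on the right, any common column permutation ${\bf P}$ turns ${\bf L}_{{\mathcal C}_d}\widetilde{\bf V}=\widetilde{\bf U}\,({\bf P}_0^T{\bf M}{\bf P}_0)$ into ${\bf L}_{{\mathcal C}_d}(\widetilde{\bf V}{\bf P})=(\widetilde{\bf U}{\bf P})\,{\bf P}^T({\bf P}_0^T{\bf M}{\bf P}_0){\bf P}$ with the conjugated factor still diagonal, so left and right components can never be mismatched and the only role of ${\bf P}_1$ is to sort the diagonal into nondecreasing order.
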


\begin{proof} The conclusions are trivial for $N=1$ and $N=2$. So we assume that $N\ge 3$ now.
 First we divide two cases, $N\ge 3$ is  odd and even,
to prove that matrices ${\bf U}$ and ${\bf V}$ in \eqref{circulant.thm.eq4} are orthogonal matrices with real entries.
Define
\begin{equation}\label{circulant.thm.pfeq10}
\widetilde {\bf U}={\bf W}{\pmb\Theta}{\bf P}_0{\bf R}=[\widetilde {\bf u}_{0},
\widetilde {\bf u}_{1},\ldots, \widetilde {\bf u}_{N-1}]
\end{equation}
and
\begin{equation}\label{circulant.thm.pfeq11}
\widetilde {\bf V}={\bf W} {\bf P}_0{\bf R}=[\widetilde {\bf v}_{0},
\widetilde {\bf v}_{1},\ldots, \widetilde  {\bf v}_{N-1}].
\end{equation}
As ${\bf P}_1$ is a permutation matrix, $\U=\widetilde {\bf U}{\bf P}_1$ and $\V=\widetilde {\bf V}{\bf P}_1$, it suffices to
show that
$\widetilde {\bf U}$ and $\widetilde {\bf V}$ are  orthogonal matrices with real entries.

\smallskip

{\em Case 1: \ $N=2K+1$  for some integer $K\ge 1$.}

By \eqref{circulant.thm.eq1},
\eqref{circulant.thm.pfeq10} and \eqref{circulant.thm.pfeq11},
we have
\begin{equation} \label{circulant.thm.pfeq12}
\widetilde {\bf u}_{0}
=\widetilde {\bf v}_{0}={\bf w}_{0}= N^{-1/2} {\bf 1}\in {\mathbb R}^N,
\end{equation}
and for $1\le k\le K$
\begin{equation}   \label{circulant.thm.pfeq13}
\left\{\begin{array}{l}
\widetilde {\bf v}_{2k-1}= \frac{{\bf w}_{k}+ {\bf w}_{N-k}}{\sqrt{2}}=\frac{{\bf w}_{k}+ \overline{\bf w}_{k}}{\sqrt{2}}\in {\mathbb R}^N\\
\widetilde  {\bf v}_{ 2k}= \frac{{\bf w}_{k}- {\bf w}_{N-k}}{\sqrt{-2}}=\frac{{\bf w}_{k}-\overline {\bf w}_{k}}{\sqrt{-2}}\in {\mathbb R}^N,
\end{array}\right.
\end{equation}
and
\begin{equation}   \label{circulant.thm.pfeq14}
\left\{\begin{array}{l}
\widetilde {\bf u}_{2k-1}=\frac{\exp(\sqrt{-1}\theta_k) {\bf w}_{k}+ \exp(-\sqrt{-1}\theta_k)\overline{{\bf w}_{k}}}{\sqrt{2}}\in {\mathbb R}^N\\
\widetilde {\bf u}_{2k}= \frac{\exp(\sqrt{-1}\theta_k){\bf w}_{k}-\exp(-\sqrt{-1}\theta_k) \overline {{\bf w}_{k}}}{\sqrt{-2}}\in {\mathbb R}^N.
\end{array}\right.
\end{equation}
Therefore $\widetilde {\bf U}$  and $\widetilde {\bf V}$ are  square matrices with real entries.
This together with the unitary property for the discrete Fourier transform matrix ${\bf W}$, the phase matrix ${\pmb\Theta}$ and the rotation matrix ${\bf R}$,
and the orthogonality of the permutation  matrix  ${\bf P}_1$ implies that
\begin{equation*} 
{\widetilde {\bf U}}^T{\widetilde {\bf U}}=
\widetilde {\bf U}^H\widetilde {\bf U}=  {\bf R}^H{\bf P}_0^T{\pmb\Theta}^H {\bf W}^H {\bf W}{\pmb\Theta} {\bf P}_0{\bf R}={\bf I}\end{equation*}
and
\begin{equation*} 
\widetilde {\bf V}^T\widetilde {\bf V}=
\widetilde {\bf V}^H\widetilde {\bf V}= {\bf R}^H{\bf P}_0^T{\bf W}^H {\bf W}{\bf P}_0{\bf R}={\bf I}.\end{equation*}
This proves that
$\tilde {\bf U}$ and $\widetilde {\bf V}$ (and hence ${\bf U}$ and ${\bf V}$ in \eqref{circulant.thm.eq4}) are  orthogonal matrices with real entries
for the case that $N$ is odd.

\smallskip
{\em  Case 2: \ $N=2K+2$  for some integer $K\ge 1$}. 

Using the similar argument used in Case 1, we can show that
\eqref{circulant.thm.pfeq12}, \eqref{circulant.thm.pfeq13}
and \eqref{circulant.thm.pfeq14} hold. In addition, we have
\begin{equation} \label{circulant.thm.pfeq15}
\widetilde {u}_{2K+1} =\widetilde {\bf v}_{2K+1}=N^{-1/2} (1, -1, \ldots,  1, -1)^T\in {\mathbb R}^N
\end{equation}
Therefore $\widetilde {\bf U}$  and $\widetilde {\bf V}$ are  square matrices with real entries.
The orthogonal property for the matrices  $\widetilde {\bf U}$  and $\widetilde {\bf V}$ can be established in a similar way used in Case 1.
This proves that
$\tilde {\bf U}$ and $\widetilde {\bf V}$ (and hence ${\bf U}$ and ${\bf V}$ in \eqref{circulant.thm.eq4}) are  orthogonal matrices with real entries
for the case that $N$ is even.

Next we establish the singular value decomposition \eqref{circulant.thm.eq3} for  the Laplacian matrix
${\mathbf L}_{{\mathcal C}_d}$.
By $|P(\omega_N^{i})|^2= |P(\omega_N^{N-i})|^2, 1\le i\le N-1$, one may verify that
\begin{equation}  \label{circulant.thm.pfeq16}
{\bf M}  {\bf P}_0 {\bf R}{\bf P}_0=  {\bf P}_0 {\bf R} {\bf P}_0 {\bf M}.\end{equation}
By \eqref{circulant.thm.eq4}, 
\eqref{circulant.thm.pfeq3}, \eqref{circulant.thm.pfeq4}, 
\eqref{circulant.thm.pfeq6},
  \eqref{circulant.thm.pfeq16}, and the permutation property ${\bf Q}^2={\bf I}$, we obtain
\begin{eqnarray*}
\L_{{\mathcal C}_d}{\bf V} & \hskip-0.08in  = & \hskip-0.08in  \L_{{\mathcal C}_d}{\bf W}{\bf P}_0 {\bf R}{\bf P}_1=
{\bf W} {\pmb\Theta} {\bf M}  {\bf P}_0 {\bf R} {\bf P}_0 {\bf Q}
={\bf W} {\pmb\Theta}  {\bf P}_0  {\bf R}  {\bf P}_0 {\bf M}   {\bf Q}
=
 {\bf U} {\pmb \Sigma}.
\end{eqnarray*}
This together with the real orthogonal property for the matrices ${\bf U}$ and ${\bf V}$ proves the
singular value decomposition in
 \eqref{circulant.thm.eq3} for  the Laplacian matrix
${\mathbf L}_{{\mathcal C}_d}$, and hence completes the proof.
\end{proof}

We finish this appendix with the proof of Theorem \ref{circulant.thm}.

\begin{proof} [Proof of Theorem \ref{circulant.thm}] Let ${\bf P}_0$ and ${\bf P}_1$ be as in
\eqref{circulant.thm.pfeq5}  and \eqref{circulant.thm.pfeq6} respectively.
 By  Proposition \ref{singular.prop}, we have
\begin{eqnarray*}
{\mathcal F}{\bf x} 
& \hskip-0.08in = & \hskip-0.08in
\frac{1}{2}
\begin{pmatrix}
 {\bf P}_1 {\bf R}^H {\bf P}_0 ({\pmb\Theta}^H+{\bf I}) {\bf W}^H {\bf x}\\
 {\bf P}_1 {\bf R}^H {\bf P}_0 ({\pmb\Theta}^H-{\bf I}) {\bf W}^H {\bf x}
\end{pmatrix}\nonumber\\
& \hskip-0.08in = & \hskip-0.08in
\frac{1}{2}
\begin{pmatrix}
 {\bf P}_1 {\bf R}^H {\bf P}_0 & {\bf 0}\\
 {\bf 0} & {\bf P}_1 {\bf R}^H {\bf P}_0 \end{pmatrix}  \begin{pmatrix} {\pmb\Theta}^H  & {\bf I}\\
{\pmb\Theta}^H  & -{\bf I}\end{pmatrix}
\begin{pmatrix} {\bf W}^H {\bf x}\\
 {\bf W}^H {\bf x}
\end{pmatrix}.
\end{eqnarray*}
This completes the proof.
\end{proof}

\subsection{Proof of Theorem \ref{Euleriangraph.thm}}\label{Euleriangraph.thm.pfappendix}
Let
${\mathcal S} ({\mathbf L}_t)$ be
 the  self-adjoint dilation  of the Laplacian ${\mathbf L}_t, 0\le t\le 1$.  By
 \eqref{Euleriangraph.eq3}, we have
\begin{equation}\label{Euleriangraph.thm.pfeq1}
{\cal S}(\L_t)
{\bf F}_t
={\bf F}_t\pmb\Lambda_t,
\end{equation}
where
\begin{equation} \label{Euleriangraph.thm.pfeq2}
{\bf F}_t=\frac{1}{\sqrt2}\begin{pmatrix}
\U_t&\U_t\\
\V_t&-\V_t
\end{pmatrix}=[{\bf z}_0(t),{\bf z}_1(t),\ldots,{\bf z}_{2N-1}(t)]
\end{equation}
 and
 \begin{equation} \label{Euleriangraph.thm.pfeq3}
 \pmb\Lambda_t:=\begin{pmatrix}
\pmb\Sigma_t&{\bf O}\\
{\bf O}&-\pmb\Sigma_t
\end{pmatrix}={\rm diag}(\lambda_0(t),\lambda_2(t),\ldots, \lambda_{2N-1}(t)).\end{equation}
By \eqref{sigmalipschitz.eq} and
the assumption on  $i$-th frequency $\sigma_i(t), 1\le i\le N-1$,  we can find $\delta>0$ such that for all  $0\le s\le 1$ with $|s-t|<\delta$,
\begin{equation} \label{Euleriangraph.thm.pfeq3+} \lambda_i(s)=\sigma_i(s)\end{equation}
is a simple eigenvalue of
self-adjoint dilation  ${\mathcal S}({\mathbf L}_s)$  of the Laplacian ${\mathbf L}_s, 0\le s\le 1$,
and
$ {\bf z}_i(s)
$
is an associated eigenvector with norm one.
This together with  \eqref{Euleriangraph.thm.pfeq1} and \eqref{Euleriangraph.thm.pfeq2}
 implies that
\begin{eqnarray}  \label{Euleriangraph.thm.pfeq4} & &
\begin{pmatrix}  \sigma_i(t){\bf I}-{\mathcal S}({\mathbf L}_t)   & {\bf z}_i(t)\\
{\bf z}_i(t)^T & 0\end{pmatrix}
 =   \begin{pmatrix} {\bf F}_t & {\bf 0}\\ {\bf 0} & 1\end{pmatrix}
\begin{pmatrix} \lambda_i(t) {\bf I}- {\pmb \Lambda}_t & {\bf e}_{i}\\
{\bf e}^T_{i}  & 0\end{pmatrix}
 \begin{pmatrix} {\bf F}_t & {\bf 0}\\ {\bf 0} & 1
\end{pmatrix}^T,\qquad
\end{eqnarray}
is nonsingular, where
 ${\bf e}_i, 0\le i\le 2N-1$, are  unit vectors of size $2N$ with all entries taking value zero except value one at $i$-th entry. 

Define a map $H: \R^{2N}\times\R\times[0,1]\rightarrow \R^{2N+1}$ by
\begin{equation}\label{H.def}
H({\bf z}, \lambda, t)=\begin{pmatrix}
\lambda  {\bf z}-{\mathcal S}(\L_t){\bf z}\\
\frac{1}{2}({\bf z}^T{\bf z}-1)
\end{pmatrix}.
\end{equation}
Then 
\begin{equation} \label{Euleriangraph.thm.pfeq6}
\nabla_{{\bf z}, \lambda} H({\bf z}, \lambda, t)=\begin{pmatrix}
\lambda {\bf I}-\mathcal S(\L_t)& {\bf z}\\
{\bf z}^T&0
\end{pmatrix}
\end{equation}
and
 \begin{equation} \label{Euleriangraph.thm.pfeq7}
\nabla_t H({\bf z}, \lambda, t)=\begin{pmatrix}
 \mathcal S({\mathbf L}^T-{\bf L}){\bf z}\\
0
\end{pmatrix}.
\end{equation}
By \eqref{Euleriangraph.thm.pfeq1}, \eqref{Euleriangraph.thm.pfeq3+}, \eqref{Euleriangraph.thm.pfeq4}, \eqref{Euleriangraph.thm.pfeq6}, \eqref{Euleriangraph.thm.pfeq7}
and the implicit function theorem, there exists $0<\tilde{\delta} <\delta$ such that for all $s$ with $|s-t|<\tilde \delta$,
$({\bf z}_i(s)^T, \sigma_i(s))$ is the unique solution of
$$H({\bf z}, \lambda, s)={\bf 0}$$
in the neighborhood of $({\bf z}_i(t)^T, \sigma_i(t))$. Applying the implicit function theorem again
and using \eqref{Euleriangraph.thm.pfeq4},  \eqref{Euleriangraph.thm.pfeq6},  \eqref{Euleriangraph.thm.pfeq7},
we obtain
\begin{eqnarray} \label{Euleriangraph.thm.pfeq8}
\begin{pmatrix}
\frac{d{\bf z}_i(t)}{dt}\\
\frac{d\sigma_i(t)}{dt}
\end{pmatrix} \hskip-0.05in & = \hskip-0.05in & \hskip-0.05in
- (\nabla_{{\bf z}, \lambda} H({\bf z}, \lambda, t))^{-1}  \nabla_t H({\bf z}, \lambda, t)\nonumber\\
\hskip-0.05in &=\hskip-0.05in&\hskip-0.05in
 - \begin{pmatrix} {\bf F}_t & {\bf 0}\\ {\bf 0} & 1\end{pmatrix}
\begin{pmatrix} (\sigma_i(t) {\bf I}- {\pmb \Lambda}_t)^\dag  & {\bf e}_i\\
{\bf e}^T_i  & 0\end{pmatrix}
 \begin{pmatrix} {\bf F}_t & {\bf 0}\\ {\bf 0} & 1
\end{pmatrix}^T
\begin{pmatrix}
 \mathcal S({\mathbf L}^T-{\bf L}){\bf z}_i(t)\\
0
\end{pmatrix}\nonumber\\
\hskip-0.05in&=\hskip-0.05in&\hskip-0.05in
-
 \begin{pmatrix}
  {\bf F}_t (\sigma_i(t) {\bf I}- {\pmb \Lambda}_t)^\dag {\bf F}_t^T {\mathcal S}(\L^T-\L) {\bf z}_i(t)\\
{\bf z}_i(t)^T {\mathcal S}(\L^T-\L) {\bf z}_i(t) \end{pmatrix},\qquad
\end{eqnarray}
where the diagonal matrix
$(\sigma_i(t) {\bf I}- {\pmb \Lambda}_t)^\dag$ is the pseudo-inverse of
$\sigma_i(t) {\bf I}- {\pmb \Lambda}_t$ with $j$-th diagonal entries being
 $(\sigma_i(t)-\sigma_j(t))^{-1}$ for $i\ne j\le N-1$,  $0$ for $j=i$ and
 $(\sigma_i(t)+\sigma_{j-N}(t))^{-1}$ for $N\le j\le 2N-1$.
Substituting  $ {\bf z}_i(t)=\frac{1}{\sqrt{2}} \begin{pmatrix} {\bf u}_i(t)\\ {\bf v}_i(t)\end{pmatrix}$
into \eqref{Euleriangraph.thm.pfeq8}, we prove
\eqref{Euleriangraph.thm.eq1}.

Let
\begin{eqnarray}\label{Euleriangraph.thm.pfeq9}
{\bf A}_{i, t} & \hskip-0.08in = & \hskip-0.08in  \frac{1}{2}\Big((\sigma_i(t) {\bf I}- {\pmb \Sigma}_t)^\dag+(\sigma_i(t) {\bf I}+ {\pmb \Sigma}_t)^{-1}\Big)
={\rm diag} (a_{i, 0}(t), \ldots, a_{i, N-1}(t))
\end{eqnarray}
and
\begin{eqnarray} \label{Euleriangraph.thm.pfeq10}
{\bf B}_{i, t} & \hskip-0.08in = & \hskip-0.08in \frac{1}{2}\Big((\sigma_i(t) {\bf I}- {\pmb \Sigma}_t)^\dag-(\sigma_i(t) {\bf I}+ {\pmb \Sigma}_t)^{-1}\Big)
={\rm diag} (b_{i, 0}(t), \ldots, b_{i, N-1}(t)).
\end{eqnarray}
  By \eqref{Euleriangraph.thm.pfeq2} and \eqref{Euleriangraph.thm.pfeq8}, we obtain
\begin{eqnarray} \label{Euleriangraph.thm.pfeq11}
\frac{d{\bf u}_i(t)}{dt} & \hskip-0.08in = & \hskip-0.08in
{\bf U}_t {\bf A}_{i, t} {\bf U}_t^T ({\bf L}^T-{\bf L}) {\bf v}_i(t)
 - {\bf U}_t {\bf B}_{i, t} {\bf V}_t^T ({\bf L}^T-{\bf L}) {\bf u}_i (t)
\end{eqnarray}
and
\begin{eqnarray} \label{Euleriangraph.thm.pfeq12}
\frac{d{\bf v}_i(t)}{dt} & \hskip-0.08in = & \hskip-0.08in
{\bf V}_t {\bf B}_{i, t} {\bf U}_t^T ({\bf L}^T-{\bf L}) {\bf v}_i(t)
- {\bf V}_t {\bf A}_{i, t} {\bf V}_t^T ({\bf L}^T-{\bf L}) {\bf u}_i (t).
\end{eqnarray}
By \eqref{Euleriangraph.thm.eq5}, we have
$${\bf u}_0^T ({\bf L}^T-{\bf L})= {\bf v}_0^T ({\bf L}^T-{\bf L})=N^{-1/2} {\bf 1}^T ({\bf L}^T-{\bf L})= {\bf 0}.$$
This together with
  \eqref{Euleriangraph.thm.pfeq9}, \eqref{Euleriangraph.thm.pfeq10}, \eqref{Euleriangraph.thm.pfeq11} and  \eqref{Euleriangraph.thm.pfeq12}
  completes the proof of \eqref{Euleriangraph.thm.eq2} and
  \eqref{Euleriangraph.thm.eq3}.

\subsection{Proof of Theorem  \ref{Euleriangraph.cor}}\label{Euleriangraph.cor.pfappendix}

By \eqref{Euleriangraph.thm.eq0},  the SVD \eqref{Euleriangraph.eq3}
 for the Laplacian $\L_t, 0\le t\le 1$ is {\em unique}, up to a sign for each eigenvectors, i.e.,
 \begin{equation}\label{Euleriangraph.cor.pfeq1} {\widetilde {\bf U}}_t=  {\bf U}_t {\bf  S}_t \ \ {\rm and} \ \ {\widetilde {\bf V}}_t=   {\bf V}_t
{\bf S}_t \end{equation}
 for any orthogonal pairs  $( {\bf U}_t, {\bf V}_t)$ and
 $(\widetilde {\bf U}_t, \widetilde {\bf V}_t)$ in  the singular value decomposition \eqref{Euleriangraph.eq3},
 where ${\bf S}_t$ is  diagonal matrix with $\pm 1$ as its diagonal entries.
 In our setting, we observe from  the singular value decomposition \eqref{Euleriangraph.eq3} that
 \begin{equation}\label{Euleriangraph.cor.pfeq2}
 {\bf L}_t= {\bf U}_t {\pmb \Sigma}_t {\bf V}_t^T= {\bf V}_{1-t} {\pmb \Sigma}_t {\bf U}_{1-t}^T,\  0\le t\le 1.
 \end{equation}
 By  \eqref{Euleriangraph.cor.pfeq1} and
  \eqref{Euleriangraph.cor.pfeq2}, and orthogonality property for ${\bf U}_t$ and ${\bf V}_t, 0\le t\le 1$, there exists
  diagonal matrices ${\bf S}_t, 0\le t\le 1$, with diagonal entries $\pm 1$ such that
  \begin{equation}  \label{Euleriangraph.cor.pfeq3}
  {\bf V}_{1-t}^T{\bf U}_t= {\bf U}_{1-t}^T{\bf V}_{t}={\bf S}_t.
  \end{equation}
  Recall that $ {\bf U}_t$ and ${\bf V}_t$ are continuous about $t\in [0,1]$.
  This, together with  \eqref{Euleriangraph.cor.pfeq3} and the observation that
    ${\bf S}_t, 0\le t\le 1$, have entries taking values $-1, 0, 1$,
   implies that ${\bf S}_t, 0\le t\le 1$, is independent on $t$, i.e.,
   \begin{equation} \label{Euleriangraph.cor.pfeq4}
   {\bf S}_t={\bf S}_{1/2}, \ 0\le t\le 1.
   \end{equation}
   For $t=1/2$, ${\mathbf L}_t$ is a symmetric matrix with all eigenvalues being simple by
   \eqref{Euleriangraph.thm.eq0}, which implies that
   ${\bf U}_{1/2}={\bf V}_{1/2}$. Hence
 ${\bf S}_{1/2}={\bf I}$.
This together with \eqref{Euleriangraph.cor.pfeq3} and
 and \eqref{Euleriangraph.cor.pfeq4}
proves \eqref{Euleriangraph.cor.eq}.

The relationship \eqref{Euleriangraph.cor.eq+}
between GFTs ${\mathcal F}_{1-t}$ and ${\mathcal F}_t, 0\le t\le 1$, follows directly from
\eqref{jointfourier.def} and \eqref{Euleriangraph.cor.eq}.

\subsection{Proof of Theorem \ref{missundirected.thm}}\label{missundirected.thm.pfsection}

Set ${\bf U}={\bf U}_{t_0}={\bf U}_{t_1}$.
By \eqref{Euleriangraph.eq3} and \eqref{missundirected.thm.eq1}, we have
\begin{eqnarray*}
& \hskip-0.08in  & \hskip-0.08in
((1-t_0){\bf L}^T+t_0 {\bf L}) ((1-t_1){\bf L}+t_1{\bf L}^T)^T\nonumber\\
& \hskip-0.08in = & \hskip-0.08in {\bf L}_{t_0} ({\bf L}_{t_1})^T=
{\bf U} {\pmb \Sigma}_{t_0} {\bf \pmb \Sigma}_{t_1} {\bf U}= {\bf U} {\pmb \Sigma}_{t_1} {\bf \pmb \Sigma}_{t_0} {\bf U}= {\bf L}_{t_1} ({\bf L}_{t_0})^T\nonumber\\
& \hskip-0.08in = & \hskip-0.08in ((1-t_1){\bf L}^T+t_1 {\bf L}) ((1-t_0){\bf L}+t_0{\bf L}^T)^T.
\end{eqnarray*}
Simplifying the above equality and using $t_0\ne t_1$ proves the conclusion that $({\bf L}^T)^2={\bf L}^2$.

    \end{document}